\renewcommand\Pr{{\mathbf P}}
\newcommand{\mscript}[1]{{\mbox{\scriptsize #1}}}
\DeclareMathOperator{\Li}{Li}
\newtheorem{theorem}{Theorem}
\newtheorem{lemma}{Lemma}
\newtheorem{definition}{Definition}
\long\def\symbolfootnote[#1]#2{\begingroup%
\def\thefootnote{\fnsymbol{footnote}}\footnotetext[#1]{#2}\endgroup}
\newcommand{\vc}[1]{\mathbf{#1}}
\newcommand{\mx}[1]{\mathbf{#1}}
\begin{document}

\title{Controlling edge dynamics in complex networks}
\author{Tam\'as Nepusz\textsuperscript{1,2} and Tam\'as Vicsek\textsuperscript{1,2,*}}

\maketitle

\footnotetext[1]{Department of Biological Physics, Eötvös Loránd University, Pázmány Péter sétány 1/a, 1117 Budapest, Hungary.}
\footnotetext[2]{Statistical and Biological Physics Research Group of the Hungarian Academy of Sciences, Pázmány Péter sétány 1/a, 1117 Budapest, Hungary.}
\symbolfootnote[1]{Corresponding author: \texttt{vicsek@hal.elte.hu}}

\begin{abstract}
The interaction of distinct units in physical, social, biological and
technological systems naturally gives rise to complex network structures.
Networks have constantly been in the focus of research for the last decade,
with considerable advances in the description of their structural and dynamical
properties.  However, much less effort has been devoted to studying
the controllability of the dynamics taking place on them.
Here we introduce and evaluate a dynamical process defined on the
edges of a network, and demonstrate that the controllability properties of this
process significantly differ from simple nodal dynamics. Evaluation of
real-world networks indicates that most of them are more controllable than
their randomized counterparts. We also find that transcriptional regulatory
networks are particularly easy to control. Analytic calculations show that
networks with scale-free degree distributions have better controllability
properties than uncorrelated networks, and positively correlated in- and
out-degrees enhance the controllability of the proposed dynamics.
\end{abstract}

The last decade has witnessed an explosive growth of interest in the
descriptive analysis of complex natural and technological systems that
permeate many aspects of everyday life\cite{newman03,albert02,boccaletti06}.
Research in network science has mostly been focused on
measuring\cite{watts98,barabasi99,amaral00},
modeling\cite{newman02b,leskovec05} and decomposing
\cite{fortunato10,newman04,palla05} network representations of existing natural
phenomena in order to deepen our understanding of the underlying systems.
Considerably less attention has been dedicated to the various types of
network dynamics\cite{ebel02,luscombe04,prill05,palla07b} and even less to the problem
of controllability\cite{yu09,lombardi07,rahmani09}, i.e. determining the
conditions under which the dynamics of a network can be driven from any initial
state to any desired final state within finite
time\cite{kalman63,sontag98,slotine91,liu11}.

Structural controllability\cite{lin74} has been proposed recently as a
framework for studying the controllability properties of directed complex
networks \cite{liu11}. In this framework, a linear time-invariant nodal
dynamics is assumed on the network, governed by the following equation:
\begin{equation}
\dot{\vc{x}}(t) = \mx{A} \vc{x}(t) + \mx{B} \vc{u}(t)
\end{equation}
where $\mx{A}$ is the transpose of the (weighted) adjacency matrix of the
network, $\vc{x}(t)$ is a time-dependent vector of the state variables of the
nodes, $\vc{u}(t)$ is the vector of input signals, and $\mx{B}$ is the
so-called \emph{input matrix} which defines how the input signals are connected
to the nodes of the network. The dynamics is said to be \emph{structurally
controllable} if there exists a matrix $\mx{A}^\ast$ with the same structure
as $\mx{A}$ such that the network can be driven from any initial state to any
final state by appropriately choosing the input signals $\vc{u}(t)$\cite{lin74}.
Here, structural equivalence of $\mx{A}$ and $\mx{A}^\ast$ means that
$\mx{A}^\ast$ is not allowed to contain a non-zero entry when the corresponding
entry in $\mx{A}$ is zero. Structural controllability is a general property
in the sense that almost all weight combinations of a given network are
controllable if the network is structurally controllable for a given
$\mx{B}$\cite{lin74,shields76}.  The minimum number of input signals is then
determined by finding a maximum matching in the network, i.e. a maximum subset
of edges such that each node has at most one inbound and at most one outbound
edge from the matching. The number of nodes without inbound edges from the
matching is then equal to the number of input signals required for structural
controllability\cite{liu11}.

Perhaps the most striking feature of the structural controllability approach to
linear nodal dynamics is that input signals tend to control the hubs of the
network only indirectly. In addition, real-world networks that seem to have
evolved to control an underlying process (such as transcriptional regulatory
networks) need many input signals\cite{liu11}. This is due to the fact that
driven nodes (i.e.  those which receive an input signal directly) are not able
to control their subordinates independently from each other. However, these
results apply only for linear nodal dynamics. In this paper, we examine and
describe a dynamics that takes place on the edges of the network, and show that
this dynamics leads to significantly different controllability properties for
the same real-world networks.

\section{Switchboard dynamics in complex networks}

We study a dynamical process on the edges of a directed
complex network $G(V, E)$ as follows.  Let $\vc{x} = \left[ x_j \right]$
denote the state vector of the process, where one state variable corresponds
to each edge of the network. Let $\vc{y}^-_i$ and $\vc{y}^+_i$
be vectors consisting of those $x_j$ values that correspond to the inbound
and outbound edges of vertex $i$, respectively, and let $\mx{M}_i$ denote a
matrix with the number of rows being equal to the out-degree and the number of
columns being equal to the in-degree of vertex $i$. Furthermore, we assume that
the dynamics can be influenced from the environment by adding an offset vector
$\vc{u}_i$ to the state vector of the outbound edges of any node $i$.  The
equations governing the dynamics of the network are then as follows:
\begin{equation}
\dot{\vc{y}}^+_i(t) = \mx{M}_i \vc{y}^-_i(t) - \boldsymbol{\tau}_i \otimes \vc{y}^+_i(t)
 + \sigma_i \vc{u}_i(t)
\label{eq:dyn-eq-original}
\end{equation}
where $\boldsymbol{\tau}_i$ is a vector of damping terms corresponding to the
edges in $\vc{y}^+_i(t)$, $\sigma_i$ is 1 if vertex $i$ is a so-called
\emph{driver node} and zero otherwise, and $\otimes$ denotes the entry-wise
product of two vectors of the same size.

We call the above the \emph{switchboard dynamics} (SBD) since each vertex $i$ acts as a
small switchboard-like device mapping the signals of the inbound
edges to the outbound edges using a linear operator $\mx{M}_i$, which is called
the \emph{mixing} or \emph{switching matrix} from now on.
To simplify the equations, state variables and signals like 
$\vc{y}_i^+$, $\vc{y}_i^-$ and $\vc{u}_i$ are implicitly considered as
time-dependent, even if the time variable $t$ is omitted. Furthermore, note
that for an edge $v \to w$, exactly one of the coordinates of $\vc{u}_v$
affects the state of this edge, therefore we can simply introduce a unified
input vector $\vc{u}$ where the $j$th element $u_j$ is simply the component of
the offset vectors that affects edge $j$ directly.

In some sense, the SBD provides a simplified representation of
the underlying dynamic processes of many real-world networks. For instance, in
social communication networks, a node (i.e.~a person) is constantly processing
the information received via its inbound edges and makes decisions which are then
communicated to other nodes via the outbound edges. The inbound and outbound
signals are then represented by the state variables $x_j$, while the decision
process is modeled by the mixing matrices $\mx{M}_i$.

We must also explain the motivation of introducing the offset vectors as a means
of controlling the system instead of assuming external input signals.
In most networks, one usually can not take control over a single edge as the
connections do not always have a physical realization. Therefore, in order to control
an edge in a network, one has to take control over the vertex from which the
edge originates, and adjust the output vector of the vertex appropriately. This
adjustment is represented by the term $\sigma_i \vc{u}_i$ for each vertex $i$.
Throughout this paper, we will be interested in determining an optimal control
configuration for the SBD of a given network, where optimality is measured by
the number of driver nodes $\sigma = \sum_i \sigma_i$.

First, we make a connection between the switchboard dynamics and a standard
linear dynamical system by re-writing the equations of the
switchboard dynamics (Eq.~\eqref{eq:dyn-eq-original}) in terms of $x_i$.
Note that the derivative of the state of an arbitrary edge $j$ originating in
some vertex $r$ and terminating in vertex $s$ depends only on itself and on the
states of edges whose head is $r$. Let us denote this latter set by
$\Gamma_j^-$, simplifying our dynamical equation to
\begin{equation}
\dot{x}_j = \sum_{k \in \Gamma_j^-} w_{kj} x_k - \tau_j x_j + \sigma_s u_j
\end{equation}
where $w_{kj}$ is the element in the mixing matrix $\mx{M}_r$ of vertex $r$
that corresponds to edge $k$ (as inbound edge) and edge $j$ (as outbound
edge), $\tau_j$ is the damping term related to edge $j$, and $u_j$ is equal
to the value of the input signal affecting the state variable of edge $j$.
Defining $w_{kj} = 0$ for all $k \notin \Gamma_j^-$ yields
\begin{equation}
\dot{\vc{x}} = (\mx{W} - \vc{T}) \vc{x} + \mx{H} \vc{u}
\label{eq:dyn-eq-rewritten}
\end{equation}
where the unknown variables are as follows:
\begin{itemize}
\item $\mx{W} = \left[ w_{kj} \right]$ is a matrix where $w_{kj}$ may
be nonzero if and only if the head of edge $k$ is the tail of edge $j$.
\item $\vc{T}$ is a diagonal matrix with the damping terms of each edge in the
main diagonal.
\item $\mx{H}$ is a diagonal matrix where the $j$th diagonal element is
$\sigma_s$ if vertex $s$ is the tail of edge $j$.
\end{itemize}

Eq.~\eqref{eq:dyn-eq-rewritten} essentially describes a simple linear
time-invariant dynamical system of the form $\dot{\vc{x}} = \mx{A} \vc{x} +
\mx{B} \vc{u}$ with the substitution $\mx{A} = \mx{W} - \mx{T}$ and $\mx{B} =
\mx{H}$. It is also easy to see that $\mx{W}$ is the adjacency matrix of the
\emph{line digraph} $L(G)$ of the original digraph $G$ by definition. The nodes
of $L(G)$ thus correspond to the edges of the original network $G$, and each
edge of $L(G)$ represents a length-two directed path of $G$. An example network
$G$ is shown in Figure~\ref{fig:ex}a, and its corresponding line digraph on
Figure~\ref{fig:ex}b. The loop edges arising from the damping term $-\mx{T}$ in
Eq.~\eqref{eq:dyn-eq-original} are omitted from
Figures~\ref{fig:ex}b and \ref{fig:ex}c, partly for sake of clarity, and partly
because soon we will demonstrate that such edges do not change the optimal
control configuration.

\begin{figure}[t!]
\centering
\includegraphics{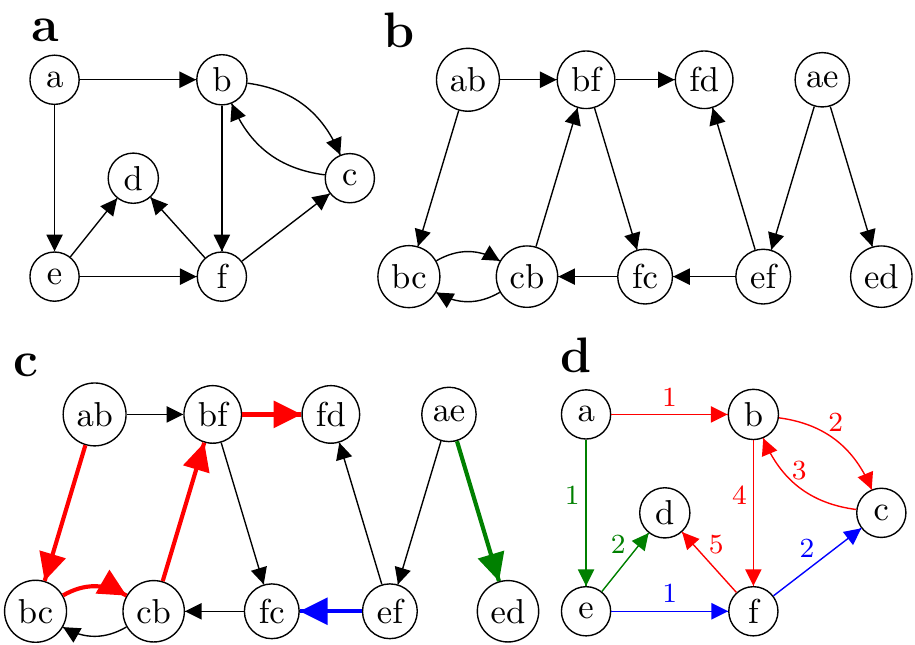}
\caption{{\bf (a)} An example network $G$ with six vertices and nine edges. The switchboard
dynamics takes place on the edges of the network. {\bf (b)} The line graph $L(G)$
corresponding to $G$. A linear time-invariant dynamics on the vertices of this
network is equivalent to the switchboard dynamics on $G$. Node labels refer to
the endpoints of the edges in $G$ to which they correspond. {\bf (c)} Applying
the maximum matching theorem to $L(G)$ yields disjoint control paths.
{\bf (d)} The control paths in $G$, mapped back from $L(G)$. Note how each path in
$L(G)$ became an edge-disjoint walk in $G$. Numbers represent the order in which
the edges have to be traversed in the walks. The two driver nodes are $a$ and $e$
since each walk starts from either $a$ or $e$.}
\label{fig:ex}
\end{figure}

\section{Structural controllability of the switchboard dynamics}

Applying the maximum matching theorem of Liu et al \cite{liu11} to $L(G)$
(Figure~\ref{fig:ex}b) gives us a set of control paths and driven nodes in the
line digraph (Figure~\ref{fig:ex}c), or equivalently, a set of driven
\emph{edges} in the original graph $G$. Since edges can be controlled only via
the offset vectors, the set of driver nodes are given by collecting those
vertices that have at least one outbound driven edge. However, note that the
maximum matching theorem guarantees only that the number of driven nodes
in $L(G)$ will be minimal, and this does not imply that the obtained set of
driver nodes in $G$ is also minimal.

Let us now compare the control paths obtained from the maximum matching in the
line graph $L(G)$ in Figure~\ref{fig:ex}c with the corresponding control paths
in the original graph $G$ in Figure~\ref{fig:ex}d. It can be seen that
the maximum matching consists of vertex-disjoint open and closed paths
(also called \emph{stem}s and \emph{bud}s) in $L(G)$, and mapping these
paths back to $G$ yields \emph{edge}-disjoint open and closed walks in $G$.
The walks together form a complete cover of the edges of $G$. Since the
first vertex of each stem has to be driven in $L(G)$, the driver nodes in
$G$ are those from which the corresponding open edge-disjoint walks
originate. Our goal is thus to find a cover that minimizes the number of nodes
from which open walks originate in $G$.

Let us call a vertex $v$ \emph{divergent} if $d_v^+ > d_v^-$, \emph{convergent}
if $d_v^+ < d_v^-$, and \emph{balanced} if $d_v^+ = d_v^-$, and let us define
a \emph{balanced component} as a connected component consisting solely of
balanced vertices and at least one edge. Our key result (which can also be
formulated as a theorem) is that the minimum set of driver nodes required to
control the SBD on a network $G(V, E)$ can be determined by selecting the
divergent vertices of $G$ and one arbitrary vertex from each balanced
component. The formal proof is given in the Appendix.

The above theorem has two important implications. First, it explains why
we are safe to ignore loop edges in $L(G)$: a loop edge of a vertex in
$G$ increases both its in-degree and its out-degree by one, thus a divergent
vertex stays divergent, and a non-divergent vertex stays non-divergent.
Second, the theorem shows that the number of driver nodes required to control
the SBD is almost completely determined by the joint degree distribution of the
network. This is in concordance with the results of Liu et al \cite{liu11} for
the linear time-invariant nodal dynamics.

\section{Controllability of real networks}

\begin{table}[tp!]
\caption{Controllability properties of the real networks analysed in this paper}
\footnotesize
\begin{tabular}{lrcccccccc} \hline \\[-8pt]
Type & \# & Name & Nodes & Edges & $n_D^\mscript{SBD}$ & $n_D^\mscript{Liu}$ & $n_D^\mscript{ER}$ & $n_D^\mscript{Degree}$ \\[2pt] \hline \\[-5pt]
Regulatory & 1. & Ownership-USCorp & 7,253 & 6,726 & 0.160 & 0.820 & 0.339 & 0.085 \\
 & 2. & TRN-EC-2 & 418 & 519 & 0.222 & 0.751 & 0.366 & 0.148 \\
 & 3. & TRN-Yeast-1 & 4,441 & 12,873 & 0.034 & 0.965 & 0.415 & 0.033 \\
 & 4. & TRN-Yeast-2 & 688 & 1,079 & 0.177 & 0.821 & 0.381 & 0.137 \\[5pt]
Trust & 5. & College$^\ast$ & 32 & 96 & 0.344 & 0.188 & 0.418 & 0.315 \\
 & 6. & Epinions$^\ast$ & 75,888 & 508,837 & 0.336 & 0.549 & 0.445 & 0.448 \\
 & 7. & Prison$^\ast$ & 67 & 182 & 0.403 & 0.134 & 0.411 & 0.451 \\
 & 8. & Slashdot$^\ast$ & 82,168 & 948,464 & 0.323 & 0.045 & 0.458 & 0.392 \\
 & 9. & WikiVote$^\ast$ & 7,115 & 103,689 & 0.281 & 0.666 & 0.463 & 0.620 \\[5pt]
Food web & 10. & Grassland & 88 & 137 & 0.318 & 0.523 & 0.381 & 0.297 \\
 & 11. & Little Rock & 183 & 2,494 & 0.639 & 0.541 & 0.463 & 0.649 \\
 & 12. & Seagrass & 49 & 226 & 0.449 & 0.265 & 0.436 & 0.433 \\
 & 13. & Ythan & 135 & 601 & 0.304 & 0.511 & 0.432 & 0.337 \\[5pt]
Metabolic & 14. & \emph{C. elegans} & 1,173 & 2,864 & 0.182 & 0.302 & 0.409 & 0.309 \\
 & 15. & \emph{E. coli} & 2,275 & 5,763 & 0.182 & 0.382 & 0.409 & 0.309 \\
 & 16. & \emph{S. cerevisiae} & 1,511 & 3,833 & 0.185 & 0.329 & 0.409 & 0.313 \\[5pt]
Electronic & 17. & s208a & 122 & 189 & 0.451 & 0.238 & 0.381 & 0.431 \\
circuits & 18. & s420a & 252 & 399 & 0.456 & 0.234 & 0.385 & 0.440 \\
 & 19. & s838a & 512 & 819 & 0.459 & 0.232 & 0.381 & 0.442 \\[5pt]
Neuronal & 20. & \emph{C. elegans} & 297 & 2,359 & 0.549 & 0.165 & 0.449 & 0.499 \\
and brain & 21. & Macaque & 45 & 463 & 0.333 & 0.022 & 0.446 & 0.457 \\[5pt]
Citation & 22. & arXiv-HepPh$^\ast$ & 34,546 & 421,578 & 0.356 & 0.232 & 0.459 & 0.577 \\
 & 23. & arXiv-HepTh$^\ast$ & 27,770 & 352,807 & 0.359 & 0.216 & 0.460 & 0.569 \\[5pt]
WWW & 24. & Google & 15,763 & 171,206 & 0.670 & 0.337 & 0.457 & 0.612 \\
 & 25. & Polblogs & 1,490 & 19,090 & 0.509 & 0.471 & 0.460 & 0.501 \\
 & 26. & nd.edu & 325,729 & 1,497,134 & 0.271 & 0.677 & 0.433 & 0.301 \\
 & 27. & stanford.edu & 281,904 & 2,312,497 & 0.665 & 0.317 & 0.450 & 0.653 \\[5pt]
Internet & 28. & p2p-1 & 10,876 & 39,994 & 0.334 & 0.552 & 0.425 & 0.344 \\
 & 29. & p2p-2 & 8,846 & 31,839 & 0.344 & 0.578 & 0.423 & 0.344 \\
 & 30. & p2p-3 & 8,717 & 31,525 & 0.343 & 0.577 & 0.424 & 0.344 \\[5pt]
Social & 31. & Twitter$^{\ast\dagger}$ & 41.7 $\times$ 10$^6$ & 1.47 $\times$ 10$^9$ & 0.402 & -- & 0.476 & 0.434 \\
communication & 32. & UCIOnline & 1,899 & 20,296 & 0.216 & 0.323 & 0.456 & 0.375 \\
 & 33. & WikiTalk & 2,394,385 & 5,021,410 & 0.022 & 0.968 & 0.399 & 0.026 \\[5pt]
Organizational & 34. & Consulting$^\ast$ & 46 & 879 & 0.522 & 0.043 & 0.458 & 0.460 \\
 & 35. & Freemans-1$^\ast$ & 34 & 645 & 0.412 & 0.088 & 0.441 & 0.476 \\
 & 36. & Freemans-2$^\ast$ & 34 & 830 & 0.588 & 0.029 & 0.439 & 0.465 \\
 & 37. & Manufacturing$^\ast$ & 77 & 2,228 & 0.597 & 0.013 & 0.468 & 0.424 \\
 & 38. & University$^\ast$ & 81 & 817 & 0.519 & 0.012 & 0.451 & 0.532 \\[5pt]
\hline
\end{tabular}
Notations are as follows: fraction of driver nodes under the switchboard dynamics
($n_D^\mscript{SBD}$) and the simple nodal dynamics \cite{liu11} ($n_D^\mscript{Liu}$);
fraction of driver nodes under the switchboard dynamics in randomized networks
using the Erdős--Rényi model ($n_D^\mscript{ER}$) and the degree-preserving configuration
model ($n_D^\mscript{Degree}$). Note that this latter model does not preserve
the joint degree distribution. Results for null models are averaged from 100
randomizations.  Networks where the edges were reversed compared to the
original publication are marked by $\ast$ (see Appendix, section \ref{sec:reversal}).
Results calculated directly from the degree distribution (i.e. not
taking into account balanced components) are marked by $\dagger$.
\label{table:real}
\end{table}

We have determined the set of driver nodes under the switchboard dynamics for
38 real networks classified into 11 categories and compared the fraction of
driver nodes $n_D$ with the model of Liu et al \cite{liu11} and with its
expected value after different types of randomizations (see
Table~\ref{table:real}).
A striking difference between the switchboard dynamics and the model of
Liu et al \cite{liu11} can be seen for two classes of
networks. Regulatory networks such as the transcriptional regulatory network of
\emph{E.coli} (TRN-EC-2 \cite{milo02}) and \emph{S.cerevisiae} (TRN-Yeast-1
\cite{balaji06}, TRN-Yeast-2 \cite{milo02}) and the ownership network of US
telecommunications and media corporations (Ownership-USCorp \cite{norlen02})
turned out to be well-controllable under the switchboard dynamics but they are
very hard to control in the linear nodal dynamics. This can be explained by the
fundamental difference between the two models. In the linear nodal dynamics,
a driven node may not influence its subordinates independently of each other,
thus the presence of out-hubs in a network degrades its controllability
significantly. In the switchboard dynamics, out-hubs behave the opposite way,
allowing one to control many state variables with a single out-hub. It follows
that driver nodes \emph{prefer} out-hubs in the SBD, while they are shown
to \emph{avoid} hubs in the linear nodal dynamics. Therefore, hubs have an
important role not only in maintaining the connectivity of a network in case
of random failures \cite{albert00,cohen00,jeong01} and containing epidemic
spreading \cite{pastorsatorras01,pastorsatorras02}, but they also make it
possible to control the network efficiently with a smaller number of driver
nodes.

The other class of networks with the largest difference between the two models
is the case of intra-organizational networks \cite{cross04,freeman79,nepusz08}.
In the model used by Liu et al, all these networks can be controlled by at most
three nodes. On closer examination, it turns out that 75\%-80\% of the
connections in each of these networks is reciprocal, i.e. an edge exists
between vertices $A$ and $B$ in both directions. A reciprocal edge pair can
easily form a bud in a maximum matching, requiring no driver node on its own,
therefore high reciprocity in a network always implies a low fraction of
driver nodes in the linear nodal dynamics, while this is not necessarily true
for the SBD.

Comparing the fraction of driver nodes for the SBD with the randomized variants
reveals that in most cases, the fraction of driver nodes required to control a
random Erdős--Rényi network\cite{erdos60,bollobas01} of the same size is larger
than the fraction of driver nodes for the real-world network, suggesting that
the structure of these networks is at least partially optimized for
controllability.  Notable exceptions are the electronic circuits \cite{milo02},
the neural network of \emph{C.elegans} \cite{achacoso92,watts98}, most of the
World Wide Web networks \cite{palla07,adamic05,albert99,leskovec05}, and the
intra-organizational networks \cite{cross04,freeman79,nepusz08}. Preserving the
in- and out-degree distributions (but not the joint distribution) brings the
fraction of driver nodes closer to the observed one after randomization, and
keeping the joint degree distribution makes the fraction of driver nodes
practically the same up to a difference of $\pm 0.002$ in the networks we have
studied, confirming that the effect of balanced components on the fraction of
driver nodes is indeed negligible for large real-world networks. Edge
deletion experiments (see Appendix) also indicate that the optimal control
configurations in the studied networks are robust to single link failures as
the networks mostly remain controllable with the same number of driver nodes
after the removal of a single edge.

\section{Analytical results for model networks}

The dependence of $n_D$ on the joint degree distribution allows us to derive
analytical formulae for the expected fraction of driver nodes for a wide
variety of model networks (see Appendix for the exact derivations).
For Erd\H{o}s--R\'enyi digraphs\cite{erdos60,bollobas01} with $n$ vertices and
an edge probability of $p$, $n_D$ is given as follows:
\begin{equation}
n_D^\mscript{ER} = \frac{1}{2} - \frac{e^{-2 \left< k \right>}}{2} I_0(2 \left< k \right>)
\end{equation}
where $\left< k \right> = np$ is the average in- and out-degree and
$I_\alpha(x)$ is the modified Bessel function of the first kind. The function
converges rapidly to 0.5 as $\left< k \right>$ increases.  Similar results are
obtained for graphs with independent exponential in- and out-degree
distributions $C e^{-k / \kappa}$ where $\kappa = 1 / \log \frac{1 + \left< k
\right>}{\left< k \right>}$:
\begin{equation}
n_D^\mscript{exp} =
\frac{\left< k \right>}{2 \left< k \right> + 1}
\end{equation}
which also approaches 0.5 rapidly as $\left< k \right> \to \infty$
(Figure~\ref{fig:sim}a).  For power-law distributed
digraphs\cite{caldarelli07,barabasi99} with $\Pr(d_v^+ = k) = \Pr(d_v^- = k) =
C k^{-\gamma} e^{-k/\kappa}$, $n_D$ is given by
\begin{equation}
n_D^\mscript{power} =
\frac{1}{2} - \frac{\Li_{2\gamma}(e^{-2/\kappa})}{2 \Li_\gamma(e^{-1/\kappa})^2}
\end{equation}
where $\Li_s(z)$ is the base $s$ polylogarithm function. As $\kappa \to \infty$,
this converges to
\begin{equation}
n_D^\mscript{power} = \frac{1}{2} - \frac{\zeta(2\gamma)}{2\zeta(\gamma)^2}
\label{eq:nd-powerlaw}
\end{equation}
(where $\zeta(x)$ is the Riemann zeta function) in the absence of any exponential
cutoff (Figure~\ref{fig:sim}b). The Appendix also contains the analytical
treatment of $k$-regular networks.

\begin{figure}[t!]
\centering
\includegraphics[width=\textwidth]{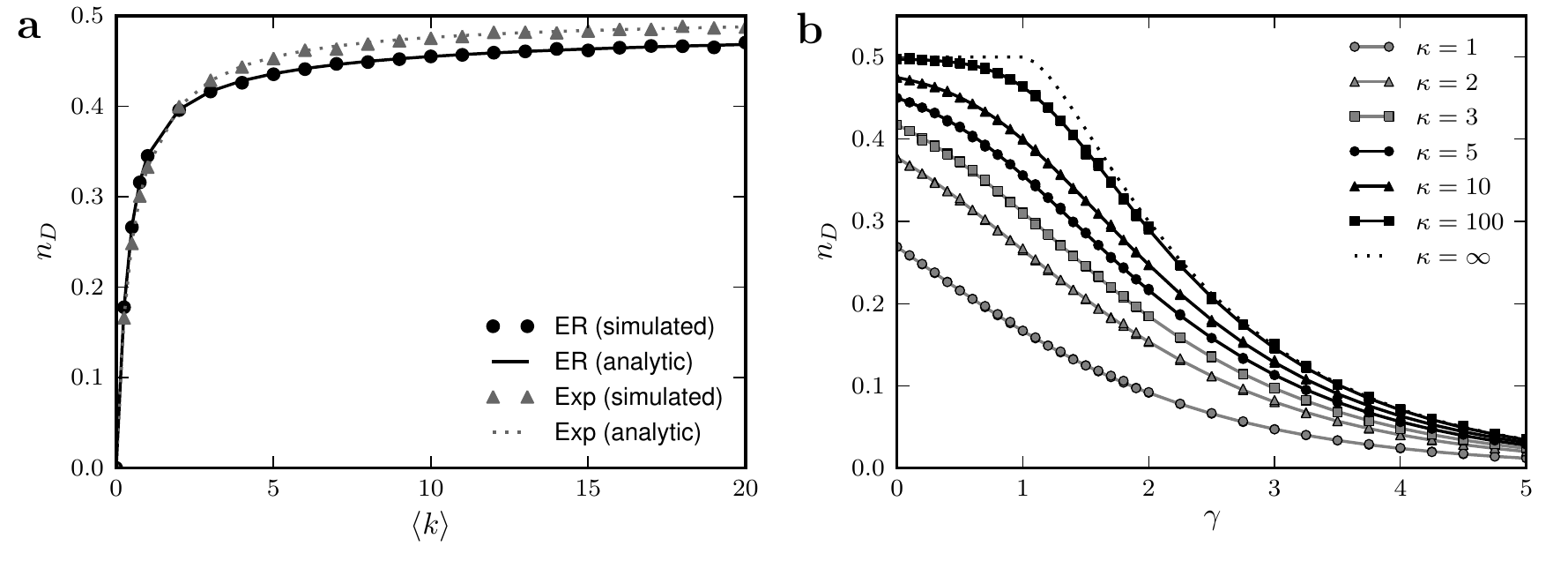}
\caption{{\bf (a)} Expected fraction of driver nodes $n_D$ in Erdős--Rényi (ER)
and exponential (Exp) networks as a function of the average degree $\left< k
\right>$.  {\bf (b)} Expected fraction of driver nodes $n_D$ in scale-free
networks with exponential cutoff as function of the exponent $\gamma$ of the
degree distribution, for different cutoff values $\kappa$.  On both panels,
symbols denote the results of simulations on networks with $10^5$ nodes, solid
lines correspond to the analytical results.}
\label{fig:sim}
\end{figure}

It is worthwhile to compare these analytical results with that of Liu et al \cite{liu11},
who have found that the fraction of driver nodes $n_D$ decreases
for both Erd\H{o}s--R\'enyi and scale-free networks as $\left< k \right> \to \infty$,
while these networks behave the opposite way under the SBD. For
$\left< k \right> \to \infty$, the fraction of driver nodes tends to $1/2$ for
Erd\H{o}s--R\'enyi networks and to $1/2 - \zeta(2\gamma) / (2 \zeta(\gamma)^2)$
for scale-free networks. The consequence is that denser networks are harder to
control (as expected by our intuition), and that scale-free networks with a
given $\left< k \right>$ are \emph{easier} to control than an
Erd\H{o}s--R\'enyi network with the same average degree. This can partly
be attributed to the higher frequency of short loops \cite{bianconi05} in
scale-free networks: these loops can be covered by closed walks and do not
require extra driver nodes.

\section{The effect of degree correlations}

Our analytical results assumed that the in-degree and the out-degree of a
node is uncorrelated, which was true for all of the model networks we have
studied. However, one-point degree correlations in real networks are significantly
different from zero \cite{bianconi08}. To study how such correlations
affect the fraction of driver nodes, we have performed simulations on
Erdős--Rényi networks and scale-free networks with an exponential cutoff and
varied the correlation as follows. First, we generated an instance of the
network model with $n = 10^5$ nodes and calculated the in- and out-degree
sequences. These instances were uncorrelated since neither the Erdős--Rényi
model nor the configuration model (which we have used to generate scale-free
networks) introduces correlations between the in- and out-degree of the same
node. Next, while keeping the in-degree sequence intact, we started swapping
elements in the out-degree sequence randomly such that only those swaps were
performed which increased the correlation. The process was continued until we
were not able to increase the correlation any more in the last $t$ steps (where
$t = 10^4$ in our simulations). A similar greedy algorithm was executed from
the original degree sequences in the opposite direction, performing swaps only
if it decreased the correlation, terminating when it was not possible to
decrease the correlation any more in the last $t$ steps. The fraction of driver
nodes $n_D$ was then calculated in the original configuration and whenever the
absolute difference of the calculated in- and out-degree correlation between
the last examined state and the current state became larger than 0.01.

The results are depicted in Figures~\ref{fig:degree_correlations}a and
\ref{fig:degree_correlations}b, both of which clearly show a general trend:
increasing the correlation between the in- and out-degrees decreases the
fraction of driver nodes. Negative one-point correlations yield a
higher fraction of driver nodes since these networks are very unlikely to
contain balanced nodes: a vertex either has a high in-degree and a low
out-degree or a high out-degree and a low in-degree.  In other words, negative
correlations indicate a clear separation of responsibilities between the nodes
of the network: divergent nodes are strongly divergent with a large difference
between the out-degree and the in-degree, while convergent nodes are strongly
convergent. Positive correlations indicate that nodes often represent complex
decision processes which map a high-dimensional input space into a similarly
high-dimensional output space. Strong positive correlations also yield networks
with a higher number of short loops \cite{bianconi08}, which can then be
covered by closed walks that do not require driver nodes on their own.

\begin{figure}[t!]
\centering
\includegraphics[width=\textwidth]{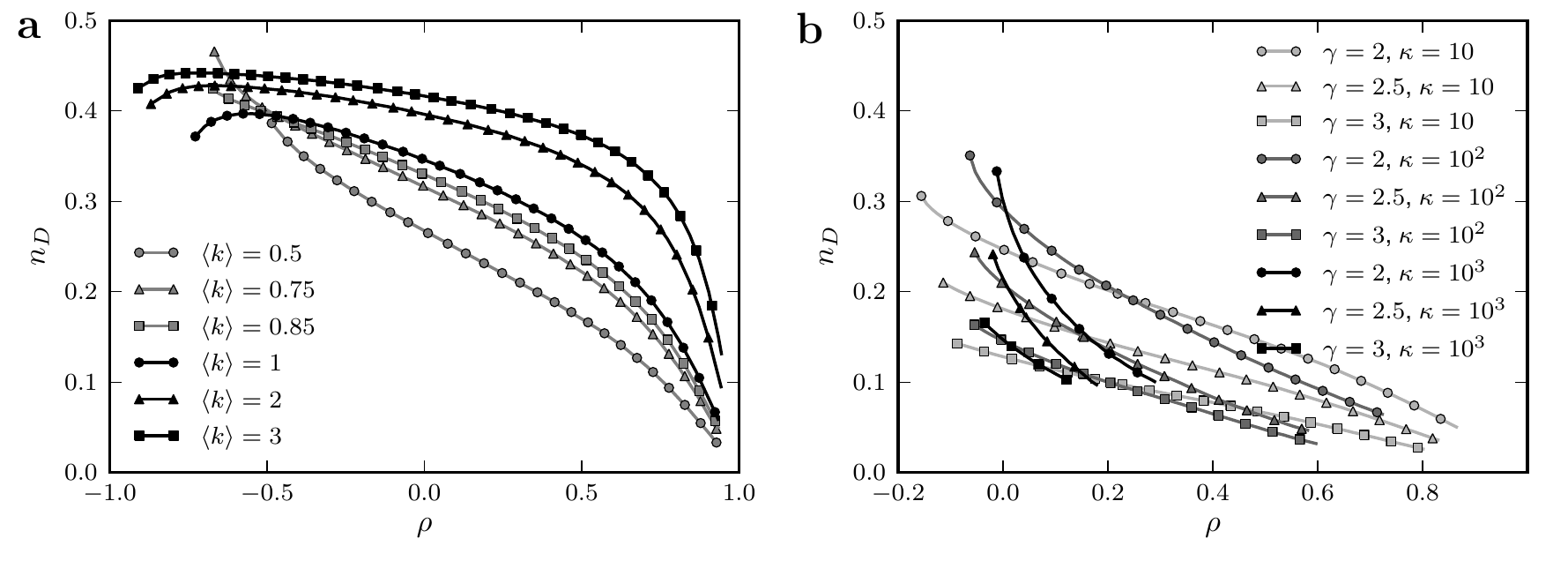}
\caption{{\bf (a)} Fraction of driver nodes $n_D$ in Erdős--Rényi (ER) networks
with different average degree $\left< k \right>$ as a function of in- and
out-degree correlation ($\rho$). {\bf (b)} Fraction of driver nodes $n_D$ in
scale-free networks with different exponents $\gamma$ as a function of in- and
out-degree correlation ($\rho$). On both panels, every fifth data point is
marked by a symbol. Each data point was obtained by averaging at least 20
different realizations of the network model; error bars were omitted as they
were smaller than the symbols.  Note that it is very hard to introduce negative
degree correlations in the case of scale-free networks and none of our test
runs managed to decrease the correlation below -0.2.
}
\label{fig:degree_correlations}
\end{figure}

\section{Conclusions}

We have presented a linear time-invariant dynamical model where state variables
correspond to the edges of a directed complex network, and the nodes of the
network act as linear operators that map state variables of inbound edges to
outbound edges.  We have demonstrated that the minimum number of driver nodes
for such systems is largely determined by the joint degree distribution of the
network. A comprehensive survey of 38 real-world networks showed that
transcriptional regulatory networks are well-controllable with a small
number of driver nodes under the switchboard dynamics, and that most
real-world networks are easier to control than random Erdős--Rényi networks
with the same number of nodes and edges. This is very different from the
findings of Liu et al \cite{liu11} who have reported a high fraction of
driver nodes under linear nodal dynamics on regulatory networks and that
randomized Erdős--Rényi networks are easier to control than the real-world
ones. The results suggest that one should choose the dynamical model
carefully before studying the controllability properties of a real-world
network as it may affect the results to a very large extent.

The behaviour of the nodal and edge dynamics is markedly different in highly
hierarchical, tree-like networks where the presence of central out-hubs rapidly
increase the required number of driver nodes for the linear nodal dynamics of
Liu et al, while the same out-hubs allow efficient control of many subordinate
nodes and thus decrease the required number of driver nodes in the switchboard
dynamics. Such hierarchies are ubiquitous in nature and society, from scales as
small as gene regulatory networks \cite{milo02,balaji06}, through leader-follower
relationships of flocking pigeons \cite{nagy10}, to the large-scale organization
of some man-made social structures like the Wikipedia talk network \cite{leskovec10}
or the ownership network of US media and telecommunications corporations \cite{norlen02}.
The presence or absence of hierarchy thus seems to be an important contributing
factor of the controllability properties of large dynamical systems.

As it happens so often in scientific research, the framework we have presented
raises more questions than answers. For instance, it is yet unknown how the
switchboard dynamics would behave in the presence of noise or nonlinearity, or
in cases when it is enough to control only a subset of the state variables
(output controllability) and only ensure that the uncontrollable ones have
stable dynamics (stabilizability).  However, as we have shown, even the first
steps along our approach could be used to deepen our understanding of the
origins of controllability of real-world networks.

\section*{Acknowledgments}

This research was supported by the EU ERC COLLMOT project. We are grateful to
Enys Mones for useful discussions.

\appendix
\appendixpage

\section{Structural controllability}

\subsection{Controllability conditions}

A continuous linear dynamical system of the form $\dot{\vc{x}} =
\mx{A} \vc{x}, \mx{A} \in \mathbb{R}^{n \times n}$ is said to be
\emph{controllable} by a set of piecewise continuous input signals $\vc{u}$ if
we are able to drive the state vector $\vc{x}$ from any arbitrary initial state
to any given state in finite time, assuming that the input signals are injected
to the linear system according to the following dynamical equation:
\begin{equation}
\dot{\vc{x}} = \mx{A} \vc{x} + \mx{B} \vc{u}
\label{eq:controllability-basic}
\end{equation}
where $\mx{B} \in \mathbb{R}^{n \times m}$ is a matrix that describes how the
input signals affect the derivatives of the state variables. $\mx{A}$ is
usually called the \emph{state matrix} and $\mx{B}$ the \emph{control matrix}.
Note that the structure of $\mx{B}$ is not constrained in any way;
one can connect any of the input signals to any of the state variables.

The \emph{Kalman rank condition} states that the system is controllable if and
only if the \emph{controllability matrix} $\left[ \mx{B} \;
\mx{A}\mx{B} \; \mx{A}^2\mx{B} \; \dots \; \mx{A}^{n-1}\mx{B} \right]$ has rank
$n$, where $n$ is the number of state variables \cite{kalman63,sontag98}. However, the
rank condition is not constructive since it does not tell us how to find an
appropriate $\mx{B}$ for a given $\mx{A}$ (preferably with a minimum number of
columns), and even testing the Kalman rank condition is computationally
expensive and numerically unstable for large $n$. Due to these difficulties,
control theorists turned to the concept of \emph{structural controllability},
first introduced by Lin in his seminal paper \cite{lin74}.

In the structural controllability framework, one assumes that the matrices
$\mx{A}$ and $\mx{B}$ contain two kinds of elements: fixed zeros and free
parameters. The free parameters of the matrices may assume any real value and
are independent of each other. A system with state matrix $\mx{A}$ and control
matrix $\mx{B}$ is then structurally controllable if it is possible to set the
free parameters of the matrices in a way that the system becomes controllable
in the usual sense. It can also be shown that if a system is structurally
controllable, then it is also controllable for all parameter values except a
set of combinations with zero Lebesgue measure \cite{lin74,shields76}; in other
words, structural controllability is a general property of the system and it
implies controllability for almost all combinations of the free parameters.

Sufficient and necessary conditions for the structural controllability of
linear time-invariant dynamical systems with known state and control matrices
$\mx{A} = \left[ a_{ij} \right]$ and $\mx{B} = \left[ b_{ij} \right]$ were
given earlier in the literature \cite{murota87,reinschke97}. The
graph-theoretic formulation of the condition is given as follows. Let $G(V, E)$
the graph representation of the dynamic system, where $V = \left\{ x_1, x_2,
\dots, x_n, u_1, u_2, \dots, u_m \right\}$, $E = E_A \cup E_B$, $E_A = \left\{
(x_i, x_j) | a_{ji} \ne 0 \right\}$ and $E_B = \left\{ (u_i, x_j) | b_{ji} \ne
0 \right\}$. Similarly, let $G^\ast(V^\ast, E^\ast)$ be the so-called
\emph{bipartite graph representation} of the system, where $V^\ast = V_R \cup
V_C \cup V_U$, $V_R = \left\{ x_1^+, x_2^+, \dots, x_n^+\right\}$,
$V_C = \left\{ x_1^-, x_2^-, \dots, x_n^- \right\}$, $V_U = \left\{ u_1, u_2,
\dots, u_m \right\}$, $E^\ast =
E^\ast_A \cup E^\ast_B$, $E^\ast_A = \left\{ (x_i^+, x_j^-) | a_{ji} \ne 0
\right\}$ and $E^\ast_B = \left\{ (u_i, x_j^-) | b_{ji} \ne 0 \right\}$.  Note
that there exists a bijection between $E$ and $E^\ast$: $(x_i, x_j)$ in $E$ is
equivalent to $(x_i^+, x_j^-)$ in $E^\ast$, and $(u_i, x_j)$ in $E$ is
equivalent to $(u_i, x_j^-)$ in $E^\ast$. The system is then structurally
controllable if and only if the following two conditions hold:
    
\begin{enumerate}
\item For all $v \in V$, $v$ is reachable from at least one of $\left\{ u_1, u_2, \dots,
      u_m \right\}$ via directed paths in $G$. This is called the
      \emph{reachability} condition. \label{reachability-cond}
\item $G^\ast$ contains $n$ independent edges, where a set of edges is
\emph{independent} if every vertex in $V^\ast$ is incident on at most one of the
edges.
\end{enumerate}

The bijection between $E$ and $E^\ast$ means that the set of independent edges
satisfying the above conditions selects $n$ edges from $E$ such that each
vertex $v \in V$ is incident on at most one inbound and at most one outbound selected
edge. For sake of simplicity and also to conform with
the terminology introduced earlier by Liu et al \cite{liu11}, we will call a set
of edges satisfying this condition a \emph{matching}\footnote{The definition of
``matching'' in this manuscript is not to be confused with matchings on
\emph{undirected} graphs, where it is required that the selected edges share no
common vertices. In this manuscript, ``matching'' always refers to a directed
matching as defined above.}, vertices with inbound selected
edges \emph{matched} and vertices without such edges \emph{unmatched}.
Note that all input vertices $u_i$ are unmatched since they have no inbound
edges, and no ordinary vertex $v_i$ will be unmatched because 1) we have
selected $n$ independent edges, 2) there are exactly $n$ ordinary vertices, 3)
we know that none of the input vertices are matched, and 4) a selected edge can
make only one vertex matched.

The selected edges form vertex-disjoint directed paths and cycles in $G$. The
directed paths are called \emph{stems} and they always originate from one of the
input vertices $u_i$ (since the first vertex of a stem is unmatched and only the
input vertices are unmatched). The directed cycles are called \emph{buds}. Stems and buds
together form the set of \emph{control paths}, since we can think about them in an
informal way as principal routes along which control signals propagate in the
system. A peculiar property of buds is that they do not require a separate
control signal: if any vertex of a bud is adjacent to the vertex of a stem,
then the stem itself will be responsible for providing the appropriate input to
the vertices of the bud as well. Note that due to the reachability condition
(see page~\pageref{reachability-cond}) there can be no bud in the system that
is not adjacent to any of the stems since each vertex is accessible from at
least one input vertex, and each input vertex is the root of one of the stems.

The above conditions can only be used to check whether the system is
structurally controllable once the control matrix $\mx{B}$ is known. In a
recent paper \cite{liu11}, Liu et al have proven the \emph{maximum matching
theorem}, which states that the minimum number of input signals required to
control a system represented by its state matrix $\mx{A}$ can be determined by
finding a maximum matching in $\mx{A}$ or a maximum set of independent edges
in the bipartite representation of $\mx{A}$
(assuming no input vertices), and then counting the number of unmatched vertices.
The maximum matching theorem also constructs the matrix $\mx{B}$ in a way that
the bipartite representation of the system with state matrix $\mx{A}$ and
control matrix $\mx{B}$ will satisfy the above conditions for structural
controllability. This is achieved by connecting an input signal to every
unmatched vertex of the graph to form one stem for each such vertex, and also
connecting these input signals to any buds that are not adjacent to stems in
order to satisfy the reachability condition. The nodes of the original network
that are connected directly to one of the input signals are called \emph{driven
nodes}, and the nodes of the input signals are called \emph{driver nodes}.

It is important to emphasize the distinction between driver and driven nodes,
since the difference between them may be arbitrarily large. The reason for this
is that one driver node may drive more than one driven node. To show that this
is not just a rare theoretical possibility, we refer the Reader to a recently
published manuscript of Cowan et al \cite{cowan11}, where the authors argue
that the dynamic equations of real networks usually include a damping term for
each state variable. These damping terms ensure that the system returns to some ground
state in the absence of external stimuli. The damping terms are represented by
nonzero diagonal elements in $\mx{A}$ and by self-loops in the network
representation of the system. When all the nodes in the network are equipped
with self-loops, a trivial maximum matching can be obtained by selecting the
self-loops only, thus constructing $n$ buds. According to the maximum matching
theorem of Liu et al, the system then requires a single driver node only, which
will be connected to all the nodes of the network.  Thus, the number of
\emph{driver} nodes will be 1 and the number of \emph{driven} nodes will be
$n$, attaining the maximum possible difference of $n-1$. Furthermore, the
theorem then states that \emph{every} real-world network with such self-loops
can be driven by a single input signal.

The distinction between driver and driven nodes is fundamental in the simple
linear time-invariant nodal dynamics assumed by Liu et al, but not in the
switchboard dynamics. In the switchboard dynamics, driver nodes are
\emph{internal} to the system: these are the nodes whose mixing matrix $M_i$
is controlled in order to drive the state variables of the edges into the
desired state. Choosing all the self-loops in the line graph $L(G)$ of the
switchboard dynamics would simply promote every node of the original network
with at least one outbound edge to a driver node. Later on in Section
\ref{main-proof}, we will prove that this is not necessarily an optimal
solution and also show a linear-time algorithm that selects the optimal driver
node configuration.

\subsection{Proof of our key result}

\label{main-proof}

For sake of clarity, we repeat some definitions from the main part of the
manuscript here.

\begin{definition}[Divergent vertex]
A vertex $v$ in a digraph $G(V, E)$ is \emph{divergent} if $d_v^+ > d_v^-$,
where $d_v^+$ is the out-degree and $d_v^-$ is the in-degree of the vertex.
\end{definition}

\begin{definition}[Convergent vertex]
A vertex $v$ in a digraph $G(V, E)$ is \emph{convergent} if $d_v^+ < d_v^-$.
\end{definition}

\begin{definition}[Balanced vertex]
A vertex $v$ in a digraph $G(V, E)$ is \emph{balanced} if $d_v^+ = d_v^-$.
\end{definition}

\begin{definition}[Balanced component]
A connected component $C \subseteq V$ in a digraph $G(V, E)$ is a
\emph{balanced component} if $v$ is balanced for every $v \in C$ and
$C$ contains at least one edge.
\end{definition}

We will also need a few more definitions and lemmas:

\begin{definition}[Edge-disjoint walk]
An edge-disjoint walk of a digraph $G(V, E)$ is a sequence of vertices
$v_0, v_1, \dots v_n$ such that $v_i \to v_{i+1}$ is a member of $E$ for every
$0 \le i < n$ and each such edge appears in the walk only once. Such a
walk is \emph{open} if $v_0 \ne v_n$ and \emph{closed} otherwise.
\end{definition}

\begin{lemma}
For every connected component $C \subseteq V$ of a digraph $G(V, E)$, exactly
one of the following three statements is true:
\begin{enumerate}
\item $C$ contains no edges.
\item $C$ contains at least one convergent and at least one divergent vertex.
\item $C$ is balanced.
\end{enumerate}
\label{lemma:conv-div}
\end{lemma}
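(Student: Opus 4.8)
The plan is to establish the ``exactly one'' claim in two parts: first that the three statements are pairwise incompatible, and then that at least one of them always holds. Together these give the required exclusive trichotomy.

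For the incompatibility I would argue directly from the definitions. If $C$ contains no edges, then $d_v^+ = d_v^- = 0$ for every $v \in C$, so $C$ has no convergent and no divergent vertex (excluding statement 2) and, lacking an edge, cannot be a balanced component (excluding statement 3). Statements 2 and 3 are incompatible because a balanced component satisfies $d_v^+ = d_v^-$ at every vertex and hence contains neither a convergent nor a divergent vertex.

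The substance of the lemma is the existence part, and the key ingredient I would use is a degree-counting identity on $C$. Since $C$ is a connected component, every edge incident to a vertex of $C$ has both endpoints inside $C$; consequently the in- and out-degrees of a vertex computed within $C$ agree with its degrees in $G$. Summing over $C$, each edge of $C$ contributes exactly one to $\sum_{v \in C} d_v^+$ (through its tail) and exactly one to $\sum_{v \in C} d_v^-$ (through its head), so that $\sum_{v \in C} (d_v^+ - d_v^-) = 0$. Now suppose statements 1 and 3 both fail: then $C$ has at least one edge but is not a balanced component, so some $v \in C$ satisfies $d_v^+ \ne d_v^-$. If $d_v^+ > d_v^-$ then $v$ is divergent, and since the signed differences $d_w^+ - d_w^-$ sum to zero over $C$, at least one other vertex $w$ must satisfy $d_w^+ < d_w^-$ and hence be convergent; the symmetric argument handles the case $d_v^+ < d_v^-$. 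Either way $C$ contains both a convergent and a divergent vertex, which is statement 2.

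I expect the only delicate point to be the justification of the identity $\sum_{v \in C} (d_v^+ - d_v^-) = 0$, and specifically the observation that no edge leaves $C$ so that local and global degrees coincide; this is precisely the defining property of a (weakly) connected component. Once that identity is in hand, the implication ``a nonzero signed degree difference forces one of the opposite sign'' is immediate, and the three cases assemble into the claimed trichotomy.
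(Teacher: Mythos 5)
Your proof is correct and follows essentially the same route as the paper's: both hinge on the identity $\sum_{v \in C}(d_v^+ - d_v^-) = 0$ for a connected component (every edge contributes once to each sum), so that a single divergent vertex forces a convergent one and vice versa. You merely phrase the existence step directly (if statements 1 and 3 fail then 2 holds) where the paper argues by contradiction, and you spell out the degree-sum identity more explicitly; the substance is identical.
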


\begin{proof}
Proving that at most one of the three statements can be true at the same time
is trivial and follows from the definitions above. To complete the proof, we
must also show that at least one of the statements must always be true. This
is done by contradiction. Suppose that there exists a connected component
$C$ in some graph $G(V, E)$ for which none of the three statements holds.
$C$ then either contains at least one convergent vertex and no divergent
vertices, or at least one divergent vertex and no convergent vertices. Both
cases are contradictory since the sum of in-degrees in any connected component
$C$ must be equal to the sum of out-degrees, and balanced vertices contribute
the same amount to both sums.
\end{proof}

\begin{lemma}
For every connected component $C \subseteq V$ of a digraph $G(V, E)$ containing
at least one edge, one of the following two statements is true:
\begin{enumerate}
\item $C$ can be covered by a single closed edge-disjoint walk.
\item $C$ can be covered by a set of open edge-disjoint walks.
\end{enumerate}
\label{lemma:cover}
\end{lemma}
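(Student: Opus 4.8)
The plan is to reduce both alternatives to the classical theorem on Eulerian circuits in directed graphs, which states that a weakly connected digraph in which every vertex has equal in- and out-degree admits a closed walk traversing each edge exactly once. By Lemma~\ref{lemma:conv-div}, a component $C$ containing at least one edge is either balanced or contains at least one convergent and at least one divergent vertex, so I would split the argument along exactly these two cases and show that the first yields statement~1 and the second yields statement~2.

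For the balanced case I would observe that $C$, being weakly connected with every vertex balanced and carrying at least one edge, has every vertex of strictly positive in- and out-degree (weak connectivity forces $d_v^+ = d_v^- \ge 1$, the single-vertex case being a self-loop). A short argument via the condensation into strongly connected components then shows that weak connectivity together with balance forces strong connectivity: a sink component of the condensation would absorb more cross-edges than it emits unless it is the whole graph. Hence the Eulerian circuit theorem applies directly and produces a single closed edge-disjoint walk covering all of $C$.

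For the mixed case I would first balance $C$ by adding auxiliary edges. Writing $\delta_v = d_v^+ - d_v^-$, I would add $\delta_v$ auxiliary inbound edges at each divergent vertex and $|\delta_v|$ auxiliary outbound edges at each convergent vertex, routed so that each auxiliary edge runs \emph{from} a convergent vertex \emph{to} a divergent one; this is possible since the positive and negative imbalances sum to the same total over $C$. The augmented digraph is still weakly connected (edges were only added) and is now balanced at every vertex, so it has an Eulerian circuit. Deleting the auxiliary edges from this circuit cuts it into maximal runs of original edges; each non-empty run is an edge-disjoint walk that begins at the head of a deleted auxiliary edge (a divergent vertex) and ends at the tail of the next one (a convergent vertex). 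Since no vertex is simultaneously divergent and convergent, each such run is an \emph{open} walk, and together the runs use every original edge exactly once, giving statement~2.

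I expect the main obstacle to be the bookkeeping that guarantees the walks obtained after deleting the auxiliary edges are genuinely open and jointly cover $C$: one must verify that empty runs (arising from two consecutive auxiliary edges in the circuit) are harmlessly discarded, that every non-empty run indeed starts at a divergent vertex and ends at a convergent one, and that no original edge is dropped or reused. Checking that the auxiliary construction preserves weak connectivity is routine, and if a fully self-contained treatment is desired, the only remaining work is to include a standard proof (e.g.\ the Hierholzer splice-in argument) of the directed Eulerian circuit theorem invoked above.
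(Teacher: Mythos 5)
Your proof is correct, and while your balanced case coincides with the paper's (both reduce to the directed Eulerian circuit theorem; your explicit check that weak connectivity plus balance forces strong connectivity is a detail the paper delegates to the citation of Hierholzer's algorithm), your treatment of the unbalanced case takes a genuinely different route. The paper proceeds greedily: it repeatedly grows a walk from a divergent vertex until it gets stuck (necessarily at a different vertex, since a divergent vertex has more outbound than inbound edges), deletes the traversed edges, and iterates until only balanced vertices remain; the balanced residue is covered by Eulerian circuits of its sub-components, and these closed walks are then spliced into open walks with which they share a vertex. You instead Eulerianize the component by adding $|d_v^+ - d_v^-|$ auxiliary edges per unbalanced vertex, routed from convergent to divergent vertices, take an Eulerian circuit of the augmented (balanced, weakly connected, hence strongly connected) multigraph, and cut it at the auxiliary edges; each nonempty run is then an open walk from a divergent to a convergent vertex, and the runs partition the original edge set. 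Your route buys a cleaner correctness argument: the paper's final splicing step tacitly assumes every residual closed walk can eventually be attached to an open walk, which does follow from the connectivity of $C$ but needs an iteration argument the paper does not spell out, whereas your construction consumes connectivity exactly once, inside the Eulerian circuit theorem. The paper's greedy algorithm, in exchange, is directly implementable in $O(m)$ time, which the appendix exploits for its linear-time driver-node algorithm. Note finally that the proof of Theorem~\ref{th:main} uses the slightly stronger property that every open walk in the cover originates at a divergent vertex; your construction delivers this as well, since every nonempty run begins at the head of a deleted auxiliary edge.
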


\begin{proof}
Lemma~\ref{lemma:conv-div} states that for non-empty connected components, the
component is either balanced or contains at least one divergent vertex. If the
component is balanced, the in-degree of each vertex is equal to the out-degree,
hence it is always possible to construct an Eulerian circuit in it using
Hierholzer's algorithm \cite{fleischner91}.  An Eulerian circuit is a closed
edge-disjoint walk by definition, thus $C$ satisfies case~1.

If $C$ is not balanced, there exists at least one divergent vertex in $C$.
We then construct a set of open walks using the following algorithm:

\begin{enumerate}
\item Select an arbitrary divergent vertex $v$. If there are no divergent
      vertices in the component, go to step 4.
\item Build a walk by following an arbitrary outgoing edge of the current
      vertex repeatedly until the walk gets stuck in a vertex $w$, while
	  making sure that each edge is included in the walk only once.
\item Store the walk, remove its edges from the component and go back to step
      1. Note that the walk is always open ($v \ne w$) since $v$ has
      more outbound edges than inbound ones, hence the walk cannot get
	  stuck in $v$.
\item At this step, there are no more divergent vertices in $C$. By
      Lemma~\ref{lemma:conv-div}, this
      implies that all the vertices are balanced. Since $C$ may have
      fallen apart into multiple connected components after the edge
      removals, construct an Eulerian circuit for each sub-component of
      $C$ and store it as a closed walk.
\end{enumerate}

The above algorithm provides us with a cover of $C$ with edge-disjoint
open and closed walks. However, note that each closed walk can be eliminated
by finding an open walk with which it shares a vertex $v$ and joining them
together in a larger open walk which traverses the original open walk from
the beginning until it arrives at $v$, then traverses the closes walk, and
resumes the original open walk at $v$ again. Repeating this procedure for
every closed walk in the cover provides us with a final cover containing
open walks only. This corresponds to case~2 in the lemma and concludes our
proof.
\end{proof}

Our key result is then as follows:

\begin{theorem}
The minimum set of driver nodes required to control the switchboard dynamics
on a network $G(V, E)$ can be determined by selecting the divergent vertices of
$G$ and one arbitrary vertex from each balanced component.
\label{th:main}
\end{theorem}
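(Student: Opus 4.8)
The plan is to reduce the problem to a per-component analysis, exploiting the correspondence established above between the SBD on $G$, the nodal dynamics on $L(G)$, and edge-disjoint walk covers of $G$. Recall that a maximum matching in $L(G)$ yields a cover of $E$ by edge-disjoint open and closed walks, that the driver nodes are exactly the vertices from which the open walks originate, and that any standalone closed walk (a bud with no adjacent stem) must additionally be seeded by a driver to satisfy the reachability condition. Since every edge of $L(G)$ corresponds to a length-two directed path of $G$, the $L(G)$-nodes arising from the edges of one connected component $C$ of $G$ never touch the edges of another component; hence drivers may be chosen independently in each component, and it suffices to compute the minimum for a single component and sum. By Lemma~\ref{lemma:conv-div}, a nonempty component is either balanced or contains at least one divergent and at least one convergent vertex, so I treat these two cases separately (edgeless components need no control).

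For a non-balanced component I claim the minimum equals the number of its divergent vertices. For the upper bound I would invoke Lemma~\ref{lemma:cover}: its constructive proof produces a cover of $C$ by open walks only, each starting at a divergent vertex, so these vertices suffice as drivers. For the lower bound I would use a degree-counting identity valid for \emph{any} edge-disjoint cover of $C$. Fixing a vertex $v$ and summing, over all walks, the number of outbound minus inbound edges of $v$ used by each walk, every closed walk and every pass-through contributes $0$, a walk starting at $v$ contributes $+1$, and a walk ending at $v$ contributes $-1$; since the cover uses each edge exactly once, this sum equals $d_v^+ - d_v^-$. Writing $s_v$ and $e_v$ for the numbers of open walks that start and end at $v$, this gives $s_v - e_v = d_v^+ - d_v^-$. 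For a divergent vertex the right-hand side is positive, forcing $s_v \ge 1$, so every divergent vertex must be the origin of an open walk and hence a driver in every admissible configuration. The two bounds match, and the construction never promotes a convergent or balanced vertex, so the minimum is exactly the number of divergent vertices.

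For a balanced component the minimum is $1$. The upper bound follows from the first case of Lemma~\ref{lemma:cover}: cover $C$ by a single Eulerian circuit and designate one arbitrary vertex $v \in C$ as a driver; in $L(G)$ the circuit is a bud whose nodes lie on a single directed cycle, so one input at $v$ seeds the whole bud and restores reachability. For the lower bound I would note that $C$ has no divergent vertex, so the identity above leaves the open-walk count unconstrained from below; the binding constraint is reachability. Because the $L(G)$-nodes coming from the edges of $C$ are reachable only from one another, at least one must receive a direct input, i.e.\ at least one vertex of $C$ must be a driver. Summing the per-component minima — the divergent vertices of every non-balanced component, which are precisely the divergent vertices of $G$, together with one vertex per balanced component — yields Theorem~\ref{th:main}.

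The main obstacle I anticipate is the lower bound, in two guises. First, making the degree-balance identity rigorous requires care with walks that revisit $v$ several times and with the bookkeeping of starts, ends and pass-throughs, so that the cancellation to $d_v^+ - d_v^-$ is exact for an \emph{arbitrary} cover rather than just the algorithmic one. Second, one must argue cleanly that a single driver at a divergent vertex can seed every open walk leaving it — so the relevant quantity is the number of \emph{distinct} origin vertices rather than the number of open walks — and, dually, that a balanced component genuinely cannot be controlled with zero drivers despite admitting a cover with no open walks. Handling the reachability obstruction at the level of $L(G)$, instead of conflating it with the walk-origin count, is the delicate point that ties the two cases together.
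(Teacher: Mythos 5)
Your proposal is correct and follows essentially the same strategy as the paper's proof: the upper bound via the walk cover of Lemma~\ref{lemma:cover}, the lower bound for divergent vertices via counting inbound versus outbound edges of $v$ used by the walks of an arbitrary complete cover (your identity $s_v - e_v = d_v^+ - d_v^-$ is just a cleaner restatement of the paper's contradiction argument), and the reachability condition forcing one driver per balanced component. The explicit per-component decomposition is a presentational refinement, not a different route.
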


\begin{proof}
The proof will proceed as follows. First, we provide an algorithm which
constructs an edge cover in $G$ such that each open walk originates in a
divergent node and each balanced component is covered by a single closed walk,
giving an upper bound on the minimum number of driver nodes. Next, we show that
every divergent node must be driven in $G$ in any control configuration, and
that one arbitrary vertex from each balanced component must also be driven,
providing a lower bound on the minimum number of driver nodes.  We then show
that the upper and lower bounds coincide, therefore our algorithm is optimal.

We have already shown in the main part of this manuscript that the switchboard
dynamics on $G$ is equivalent to a linear time-invariant dynamics on the nodes
$L(G)$, for which a set of driver and driven nodes can be determined using the
maximum matching theorem of Liu et al \cite{liu11}. The maximum matching
theorem states that a given matching in $L(G)^\ast$ yields a set of stems
(directed vertex-disjoint paths) and buds (directed vertex-disjoint cycles) in
$L(G)$, and the roots of the stems (i.e.~the first vertices in the order of
traversal) have to be controlled by external signals\footnote{In case of a
non-maximum matching, some of the nodes have no incident edges selected in the
matching; these nodes can be considered as stems on their own.}. Buds do not
require separate driver nodes because they are either adjacent to a stem (and
thus use the signal from the stem) or one of the nodes in the bud is connected
to an already existing input signal directly. The driven nodes will be the
roots of the stems and an arbitrarily chosen vertex in each bud that is not
adjacent to a stem.

Each non-loop edge in the line digraph $L(G)$ corresponds to a length-two path
in $G$.  This implies that each stem in $L(G)$ corresponds to a concatenation
of length-two paths, yielding an edge-disjoint walk on $G$, which may
contain the same vertex multiple times but may not traverse the same edge
twice. Similarly, buds not containing a loop edge in $L(G)$ correspond to
edge-disjoint closed walks on $G$, and buds consisting of a single loop edge
in $L(G)$ yield a single open path of length 1 in $G$. Since each vertex in
$L(G)$ participates in either a stem or a bud (but not both at the same time),
mapping the stems and buds in $L(G)$ back to $G$ provides us with a cover of
$G$ using edge-disjoint closed and open walks. Note that the mapping is
injective: an edge-disjoint walk in $G$ can also be mapped back uniquely to a
stem or a bud in $L(G)$. Therefore, a matching in $L(G)$ is completely
equivalent to a cover of $G$ with edge-disjoint walks, and we are free to work
with either of them.

A possible cover of edge-disjoint walks can be obtained using the algorithm
described in Lemma~\ref{lemma:cover}. Such a cover creates a closed walk for
every balanced component and a set of open walks for every non-balanced
component. Mapping the walks to $L(G)$ gives us a set of stems and buds:

\begin{itemize}
\item Closed walks will become buds without loop edges in $L(G)$.
\item Open walks with at least two edges become stems in $L(G)$.
\item Open walks with a single edge will correspond to the appropriate
      loop edge in $L(G)$, thus becoming a bud with a single loop edge.
\end{itemize}

Together, these stems and buds form a set of control paths. Each stem requires
an input signal, hence the first vertex of each open walk in $G$ (i.e.~every
divergent vertex) will have to be driven. Since closed walks occur exclusively
within balanced components, and each balanced component contains only one
closed walk, the buds corresponding to them will not be adjacent to any of
the stems in $L(G)$, hence they will also have to be connected to some input
signal directly.  The only way we are allowed to achieve this in case of the
switchboard dynamics is to promote one of the nodes in the bud to a driver
node. Therefore, one driver node will be required for every balanced component
and for every divergent node of $G$. We have thus obtained an upper bound on
the number of driver nodes in $G$. To prove that the algorithm in
Lemma~\ref{lemma:cover} is optimal and conclude the proof, we will show that
this is also a lower bound.

Assume that there exists a complete cover of the edges (i.e. a control
configuration) of $G$ and there exists a divergent node $v$ such that $v$ is
not a driver node.  Since $v$ is not a driver node, there is no open walk
originating from it. Let us now consider all the walks $v$ is a part of. Closed
walks enter and leave $v$ the same number of times. Since no open walk
originates from $v$, each open walk either enters and leaves $v$ the same
number of times, or terminates in $v$.  Therefore, the number of covered
inbound edges of $v$ must be equal to or larger than the number of covered
outbound edges of $v$. However, since $v$ is divergent, it has more outbound
edges than inbound edges, therefore at least one outbound edge is not covered.
This contradicts our assumption that we are working with a complete cover.
Therefore, by contradiction, we have shown that every divergent vertex of $G$
must be a driver node in any control configuration.

Due to the reachability condition (see page~\pageref{reachability-cond}) , we
must also ensure that there is at least one driver node in every connected
component not containing a divergent vertex.  Lemma~\ref{lemma:conv-div} states
that every connected component $C$ of $G$ is empty, balanced, or contains at
least one convergent vertex. To satisfy the reachability condition, we must
therefore promote one of the vertices in every balanced component to a driver
node. Therefore, a lower bound on the number of driver nodes in $G$ is the
number of divergent nodes plus the number of balanced components of $G$. Since
the lower and upper bounds coincide, our algorithm is optimal. This concludes
our proof.
\end{proof}

Note that the algorithm given above allows one to determine the minimum set of
driver nodes for the switchboard dynamics on an arbitrary graph $G(V, E)$ in
$O(n+m)$ time (where $n$ is the number of vertices and $m$ is the number of
edges): building the edge-disjoint walks takes $O(m)$ time (since each edge has
to be evaluated only once), calculating the connected components takes
$O(n+m)$, and an additional $O(n)$ step can decide which connected components
are balanced.

\section{Analytical results}

In the main part of this manuscript, we have presented analytical formulae for
the expected fraction of driver nodes in Erd\H{o}s--R\'enyi, exponential and
scale-free networks. These formulae are based on the fact that the fraction of
driver nodes depends almost completely on the joint degree distribution of the
network according to Theorem~\ref{th:main}. By neglecting the possible
existence of balanced components, the fraction of driver nodes for graphs with
a joint degree distribution $\Pr(d_v^- = i, d_v^+ = j) = p_{ij}$ is simply
given by
\begin{equation}
n_D = \sum_{i=0}^\infty \sum_{j=i+1}^\infty p_{ij}
\label{eq:nd-general-1}
\end{equation}
i.e.~one simply has to calculate the sum of joint probabilities for cases when the
in-degree is smaller than the out-degree. When the in- and out-degrees are uncorrelated
and identically distributed (as in all the model networks we have presented in
the main part of this manuscript), it is also true that $p_{ij} = p_{ji}$,
hence $n_D$ can also be written as
\begin{equation}
n_D = \frac{1 - \sum_{k=0}^\infty p_{kk}}{2}
\label{eq:nd-general-2}
\end{equation}
i.e.~the fraction of driver nodes is equal to half the probability of non-balanced
nodes. The formulae presented in the main part of this manuscript are all based
on Eq.~\ref{eq:nd-general-2} by substituting the actual degree distribution of
the network model in question.

\subsection{Erdős--Rényi digraphs}

For Erd\H{o}s--R\'enyi digraphs with $n$ vertices and an edge probability of $p$,
both the in- and out-degrees follow a Poisson distribution with $\left< k
\right> = np$, hence $n_D$ is given as follows:
\begin{equation}
n_D^\mscript{ER} =
\frac{1}{2} \left( 1 - \sum_{k=0}^\infty \frac{\left<k\right>^{2k}}{k!k!} e^{-2\left<k\right>} \right) =
\frac{1}{2} \left( 1 - e^{-2\left<k\right>} I_0(2 \left<k\right>) \right)
\end{equation}
where $I_\alpha(x)$ is the modified Bessel function of the first kind. An equivalent
derivation follows from the fact that the difference of the in- and out-degree
of a node follows a Skellam distribution \cite{skellam46}, thus the probability
of balanced nodes is equal to the value of the probability moment function of
$\mbox{Skellam}(\left<k\right>, \left<k\right>)$ at $x=0$.

\subsection{Exponential networks}

In exponential networks, in-degrees and out-degrees are assumed to be distributed
with $\Pr(d_v^+ = k) = \Pr(d_v^- = k) = C e^{-k / \kappa}$ where
$C = 1 - e^{-1 / \kappa}$ and $\kappa = 1 / \log \frac{1 + \left< k \right>}
{\left< k \right>}$. The expected value of $n_D$ then follows from simple
algebraic manipulations:
\begin{eqnarray}
n_D^\mscript{exp} & = &
\frac{1}{2} \left( 1 - C^2 \sum_{i=0}^\infty e^{-2i / \kappa} \right) =
\frac{1}{2} \left( 1 - C^2 \frac{1}{1 - e^{-2 / \kappa}} \right) =
\frac{1}{2} \left( 1 - \frac{1 - e^{-1 / \kappa}}{1 + e^{-1 / \kappa}} \right) \\
& = & \frac{e^{-1/\kappa}}{1 + e^{-1/\kappa}} = \frac{\left<k\right>}{\left<k\right>+1}
\frac{\left<k\right> + 1}{2\left<k\right>+1} = 
\frac{\left< k \right>}{2 \left< k \right> + 1}
\end{eqnarray}
where in the penultimate step we have made use of $e^{-1/\kappa} = \frac{\left<k\right>}{\left<k\right>+1}$.

\subsection{Power-law networks}

In this case, we distinguish between networks with a power-law-like distribution that
has an exponential cutoff of the form $\Pr(d_v^+ = k) = \Pr(d_v^- = k) = C k^{-\gamma}
e^{-k/\kappa}$, and pure power-law distributions without a cutoff that follow
$\Pr(d_v^+ = k) = \Pr(d_v^- = k) = C k^{-\gamma}$. The exponential cutoff makes it
possible to normalize the distribution for a given average degree. We will start
with the former case and then show how $n_D$ behaves as the exponential cutoff
vanishes (i.e. $\kappa \to \infty$).

In the general case, $n_D$ is given by
\begin{equation}
n_D^\mscript{power} = \frac{1}{2} \left( 1 - C^2 \sum_{i=0}^\infty k^{-2\gamma} e^{-2k/\kappa} \right) = \frac{1}{2} \left( 1 - C^2 \Li_{2\gamma}(e^{-2/\kappa}) \right) 
= \frac{1}{2} - \frac{\Li_{2\gamma}(e^{-2/\kappa})}{2 \Li_\gamma(e^{-1/\kappa})^2}
\end{equation}
since we know that $C = \Li_\gamma(e^{-1/\kappa})$, where $\Li_s(z)$ is the
polylogarithm function. For $z=1$, the polylogarithm reduces to the Riemann
zeta function, yielding
\begin{equation}
n_D^\mscript{power} = \frac{1}{2} - \frac{\zeta(2\gamma)}{2 \zeta(\gamma)^2}
\end{equation}
for pure power-law networks.

\subsection{$k$-regular networks}

The three network models presented so far produce balanced components with a
very low probability, hence we were safe to ignore such components in our
analytical calculations. In this section, we present similar calculations
for networks where the in- and out-degree of each vertex is $k/2$ for some
even $k$. These networks consist of balanced nodes only, and the number of
driver nodes is given by the number of connected components of the graph containing
at least one edge.

\begin{theorem}
In a $k$-regular directed network $G(V, E)$ with $n$ vertices, the number of
driver nodes is zero if $k = 0$, one if $k \ge 4$ and the $n$th harmonic
number $H_n$ if $k = 2$.
\end{theorem}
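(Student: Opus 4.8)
The plan is to reduce everything to counting connected components via Theorem~\ref{th:main} and then analyse the component structure of the appropriate random model. Since every vertex is balanced, Theorem~\ref{th:main} gives that the number of driver nodes equals the number of balanced components, i.e.\ the number of (weakly) connected components containing at least one edge. The case $k=0$ is then immediate: there are no edges, hence no balanced components, and the number of driver nodes is zero. For the remaining cases I would model a random $k$-regular digraph as the superposition of $d := k/2$ independent uniform random permutations $\sigma_1,\dots,\sigma_d$ of $V$, where the arcs out of vertex $i$ terminate in $\sigma_1(i),\dots,\sigma_d(i)$; this produces in- and out-degree exactly $d$ at every vertex, and for $d=1$ it reduces to a single uniform random permutation.

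For $k=2$ (so $d=1$) the digraph is the functional graph of one uniform random permutation $\sigma$, whose weakly connected components are precisely the cycles of $\sigma$ (a fixed point contributes a self-loop, which is still a one-edge balanced component and counts as a length-one cycle). Hence the number of driver nodes equals the number of cycles of $\sigma$, and I would compute its expectation. The cleanest route is the recurrence obtained by looking at the cycle through a fixed element: its length is uniform on $\{1,\dots,n\}$, and deleting it leaves a uniform random permutation on the remaining points, giving $\mathbb{E}[C_n]=1+\tfrac1n\sum_{j=0}^{n-1}\mathbb{E}[C_j]$ with $\mathbb{E}[C_0]=0$. Writing this as $n\,\mathbb{E}[C_n]=n+\sum_{j=0}^{n-1}\mathbb{E}[C_j]$ and subtracting the analogous identity for $n-1$ collapses it to $\mathbb{E}[C_n]=\mathbb{E}[C_{n-1}]+1/n$, so $\mathbb{E}[C_n]=H_n$, as claimed.

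For $k\ge 4$ (so $d\ge 2$) the number of components of $G$ is at most the number of components of the subgraph formed by just two of the permutation factors, $\sigma:=\sigma_1$ and $\tau:=\sigma_2$, since the remaining $d-2$ factors can only merge components further. A weakly connected component of $\sigma\cup\tau$ is exactly an orbit of the group $\langle\sigma,\tau\rangle$, so $G$ is connected iff this group acts transitively on $V$, and I would bound the probability of non-transitivity by a first-moment argument over invariant subsets. A random permutation fixes a given $s$-subset setwise with probability $s!(n-s)!/n! = 1/\binom{n}{s}$, so the expected number of proper non-empty subsets invariant under both $\sigma$ and $\tau$ is $\sum_{s=1}^{n-1}\binom{n}{s}\binom{n}{s}^{-2}=\sum_{s=1}^{n-1}\binom{n}{s}^{-1}\to 0$ (dominated by the $s=1$ and $s=n-1$ terms, each $1/n$). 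By Markov's inequality $G$ is therefore connected with probability tending to $1$, the single balanced component needs one driver node, and the expected number of driver nodes tends to $1$.

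The step I expect to be the main obstacle is the $k\ge 4$ case: unlike the exact $H_n$ identity for $k=2$, the claim of a single driver node is an asymptotic (high-probability) statement about connectivity, and the delicate point is justifying that two of the permutation factors already force transitivity. The first-moment bound above settles this, and it also makes precise that ``one'' should be read as the limiting expected value. Finally, to justify the permutation-superposition model I would note that the underlying digraph always decomposes into $d$ arc-disjoint permutations by König's edge-colouring theorem applied to the associated $d$-regular bipartite graph (split each vertex into an out-copy and an in-copy), so this is the natural model in which to phrase the result.
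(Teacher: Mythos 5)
Your proof is correct, and while it follows the same skeleton as the paper's (reduce to counting balanced components via Theorem~\ref{th:main}, then treat $k=0$, $k=2$, $k\ge4$ separately), both nontrivial cases are argued by genuinely different means. For $k=2$ the paper shows directly that a fixed vertex lies on an $m$-cycle of the permutation $\pi$ with probability $1/n$, deduces that the expected number of $m$-cycles is $1/m$, and sums over $m$; your recurrence $\mathbb{E}[C_n]=\mathbb{E}[C_{n-1}]+1/n$, obtained from the uniform length of the cycle through a fixed element, is an equally elementary and equivalent derivation of $H_n$. The real divergence is at $k\ge4$: the paper discards the orientations and cites the theorem that a random $k$-regular undirected graph is almost surely $k$-connected, whereas you work in the permutation model and prove connectivity from scratch via a first-moment bound on proper nonempty subsets invariant under two independent uniform permutations, $\sum_{s=1}^{n-1}\binom{n}{s}^{-1}=O(1/n)$. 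Your argument is self-contained and quantitative where the paper's is a citation; its price is that it is tied to the permutation model (independent uniform $\sigma_1,\dots,\sigma_{k/2}$), which is not literally the uniform random $k$-regular digraph --- your K\H{o}nig-decomposition remark shows that every such digraph splits into $k/2$ arc-disjoint permutations, but not that the factors of a uniformly chosen one are independent and uniform. Since the paper never pins down its random model either, and the two models agree for asymptotic almost-sure connectivity statements, this is a difference of convention rather than a gap. Both proofs, yours explicitly and the paper's implicitly, read ``one driver node for $k\ge4$'' as an asymptotic, almost-sure statement rather than a deterministic one.
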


\begin{proof}
The case of $k = 0$ is trivial: there are no edges to control and hence the
fraction of driver nodes is zero. For $k \ge 4$, dropping the arrowheads gives
us an undirected $k$-regular graph where it can be proven that it is almost
surely $k$-connected \cite{bollobas01}, implying that the original digraph
requires only one driver node. For $k = 2$, each vertex has exactly one inbound
and one outbound edge, thus the entire graph consists of disjoint directed
cycles. By denoting the head of the outbound edge of vertex $v$ by $\pi(v)$, we
obtain a permutation $\pi$ on the vertices of the graph, and the number of
connected components will be given by the number of cycles in $\pi$.

Let us call a sequence of elements $u_1, u_2, \dots, u_m$ an $m$-cycle of $\pi$
if each $u_i$ is equal to some $v_j$ and it holds that $\pi(u_1) = u_2,
\pi(u_2) = u_3, \dots, \pi(u_m) = u_1$.
First we prove that the probability of the event that $v_1$ is a part of an
$m$-cycle is $1/n$. We require that $u_1 = v_1$, $u_2 = \pi(u_1) \ne v_1$,
$u_3 = \pi(u_2) \ne v_1, \dots \pi(u_m) = v_1$. Therefore,

$$
\Pr(v_1\mbox{~is in an $m$-cycle}) = \frac{n-1}{n} \frac{n-2}{n-1} \frac{n-3}{n-2} \cdots \frac{n-m}{n-m+1}\frac{1}{n-m} = \frac{1}{n}
$$

Of course the above proof applies to every $v_i \in V$. Since each $v_i$ is a
part of an $m$-cycle with probability $1/n$, the expected number of vertices
being part of an $m$-cycle is exactly 1, and since an $m$-cycle contains $m$
vertices, the expected number of $m$-cycles is $1/m$. The expected number of
cycles of any length $\mathcal{E}$ then follows by a simple summation:

$$
\mathcal{E} = \sum_{m=1}^n \frac{1}{m} = H_n
$$

This concludes our proof.
\end{proof}

Since $H_n$ scales approximately as $\log n$, the fraction of driver nodes will
scale as $\log n / n$ and tend to zero as $n \to \infty$.  $k$-regular graphs
are thus extremely well-controllable in the infinite limit, requiring $O(1)$
driver nodes if $k \ne 2$ and $O(\log n)$ driver nodes if $k = 2$.

\section{Computational results}

\subsection{Data sources of real networks}

\label{sec:reversal}
The details of the real-world networks we have studied are presented in
Table~\ref{tab:real}. Note that the semantics of the switchboard
dynamics requires that a directed $A \to B$ edge represents a direct influence of
$A$ on $B$ and not the other way round, hence we had to reverse the edge directions
in some of the networks to make it conform to this semantics. For instance, an
$A \to B$ edge in a trust network usually means that $A$ trusts $B$, hence $B$
has a direct influence on $A$. For sake of clarity, the table includes the
semantics of each edge.

\begin{table}
\begin{center}
\footnotesize
\begin{tabular}{p{2.2cm}rccclp{4cm}} \hline
Type & \# & Name & $n$ & $m$ & & Semantics of $A \to B$ \\[2pt] \hline \\[-5pt]
Regulatory & 1. & Ownership-USCorp & 7,253 & 6,726 & \cite{norlen02} & $A$ owns $B$ \\
& 2. & TRN-EC-2 & 418 & 519 & \cite{milo02} & $A$ regulates $B$ \\
& 3. & TRN-Yeast-1 & 4,441 & 12,873 & \cite{balaji06} & $A$ regulates $B$ \\
& 4. & TRN-Yeast-2 & 688 & 1,079 & \cite{milo02} & $A$ regulates $B$ \\[5pt]

Trust & 5. & College$^\ast$ & 32 & 96 & \cite{vanduijn03,milo04} & $A$ is trusted by $B$ \\
& 6. & Epinions$^\ast$ & 75,888 & 508,837 & \cite{richardson03} & $A$ is trusted by $B$ \\
& 7. & Prison$^\ast$ & 67 & 182 & \cite{vanduijn03,milo04} & $A$ is trusted by $B$ \\
& 8. & Slashdot$^\ast$ & 82,168 & 948,464 & \cite{leskovec09} & $A$ is trusted by $B$ \\
& 9. & WikiVote$^\ast$ & 7,115 & 103,689 & \cite{leskovec10} & $A$ was voted on by $B$ \\[5pt]

Food web & 10. & Grassland & 88 & 137 & \cite{dunne02} & $A$ preys on $B$ \\
& 11. & Little Rock & 183 & 2,494 & \cite{martinez91} & $A$ preys on $B$ \\
& 12. & Seagrass & 49 & 226 & \cite{christian99} & $A$ preys on $B$ \\
& 13. & Ythan & 135 & 601 & \cite{dunne02} & $A$ preys on $B$ \\[5pt]

Metabolic & 14. & \emph{C. elegans} & 1,173 & 2,864 & \cite{jeong00} & $B$ is produced from $A$ \\
& 15. & \emph{E. coli} & 2,275 & 5,763 & \cite{jeong00} & $B$ is produced from $A$ \\
& 16. & \emph{S. cerevisiae} & 1,511 & 3,833 & \cite{jeong00} & $B$ is produced from $A$ \\[5pt]

Electronic & 17. & s208a & 122 & 189 & \cite{milo02} & $B$ is a function of $A$ \\
circuits & 18. & s420a & 252 & 399 & \cite{milo02} & $B$ is a function of $A$ \\
 & 19. & s838a & 512 & 819 & \cite{milo02} & $B$ is a function of $A$ \\[5pt]

Neuronal and brain & 20. & \emph{C. elegans} & 297 & 2,359 & \cite{achacoso92,watts98} & $B$ is within one synapse or gap junction distance from $A$ \\
& 21. & Macaque & 45 & 463 & \cite{negyessy06} & Area $A$ is connected to area $B$ \\[5pt]

Citation & 22. & arXiv-HepPh$^\ast$ & 34,546 & 421,578 & \cite{leskovec05} & $A$ is cited by $B$ \\
& 23. & arXiv-HepTh$^\ast$ & 27,770 & 352,807 & \cite{leskovec05} & $A$ is cited by $B$ \\[5pt]

WWW & 24. & Google & 15,763 & 171,206 & \cite{palla07} & $A$ links to $B$ \\
& 25. & Polblogs & 1,490 & 19,090 & \cite{adamic05} & $A$ links to $B$ \\
& 26. & nd.edu & 325,729 & 1,497,134 & \cite{albert99} & $A$ links to $B$ \\
& 27. & stanford.edu & 281,904 & 2,312,497 & \cite{leskovec05} & $A$ links to $B$ \\[5pt]

Internet & 28. & p2p-1 & 10,876 & 39,994 & \cite{ripeanu02,leskovec05} & $A$ sent messages to $B$ \\
 & 29. & p2p-2 & 8,846 & 31,839 & \cite{ripeanu02,leskovec05} & $A$ sent messages to $B$ \\
 & 30. & p2p-3 & 8,717 & 31,525 & \cite{ripeanu02,leskovec05} & $A$ sent messages to $B$ \\[5pt]

Social & 31. & Twitter$^\ast$ & $41.7 \times 10^6$ & $1.47 \times 10^9$  & \cite{kwak10} & $A$ is followed by $B$ \\
communication & 32. & UCIOnline & 1,899 & 20,296 & \cite{panzarasa09} & $A$ sent emails to $B$ \\
 & 33. & WikiTalk & 2,394,385 & 5,021,410 & \cite{leskovec10} & $A$ edited the talk page of $B$ on Wikipedia \\[5pt]

Intra- & 34. & Consulting$^\ast$ & 46 & 879 & \cite{cross04} & $B$ turned to $A$ for advice \\
organizational & 35. & Freemans-1$^\ast$ & 34 & 645 & \cite{freeman79} & $A$ was nominated by $B$ on a questionnare as acquaintance \\
& 36. & Freemans-2$^\ast$ & 34 & 830 & \cite{freeman79} & $A$ was nominated by $B$ on a questionnare as acquaintance \\
& 37. & Manufacturing$^\ast$ & 77 & 2,228 & \cite{cross04} & $B$ turned to $A$ for advice \\
& 38. & University$^\ast$ & 81 & 817 & \cite{nepusz08} & $A$ was nominated by $B$ on a questionnare \\[5pt]

\hline
\end{tabular}
\caption{Summary of the real networks analyzed in the paper. $n$ denotes the number of
nodes, $m$ denotes the number of edges. Networks where the edges were reversed
compared to the original publication are marked by an asterisk ($\ast$).}
\end{center}
\label{tab:real}
\end{table}

\subsection{Robustness of control configurations}

To study the robustness of real networks against random control path failures,
we have classified each edge according to the change in the number of driver
nodes when the edge is removed from the network. We distinguish three cases and
accordingly three classes of edges. The removal of a \emph{critical} edge
increases the number of driver nodes required to maintain controllability.
Conversely, the removal of a so-called \emph{distinguished} edge decreases the
number of driver nodes.  The remaining edges are called \emph{ordinary} since
their removal does not affect the set of driver nodes.

Figure~\ref{fig:edge_classes} shows the fraction of critical, ordinary and
distinguished edges in each studied real network, indicating that most networks
possess only a small fraction of critical or distinguished edges, thus exhibiting
a high degree of robustness against changes in control configurations due to
random edge removals. The two significant exceptions are the electronic circuit
networks (s208a, s420a and s838a) \cite{milo02}, which contain a high fraction
of distinguished edges, and the metabolic networks \cite{jeong00}, where almost
half of the edges are critical.

\begin{figure}
\includegraphics[width=\textwidth]{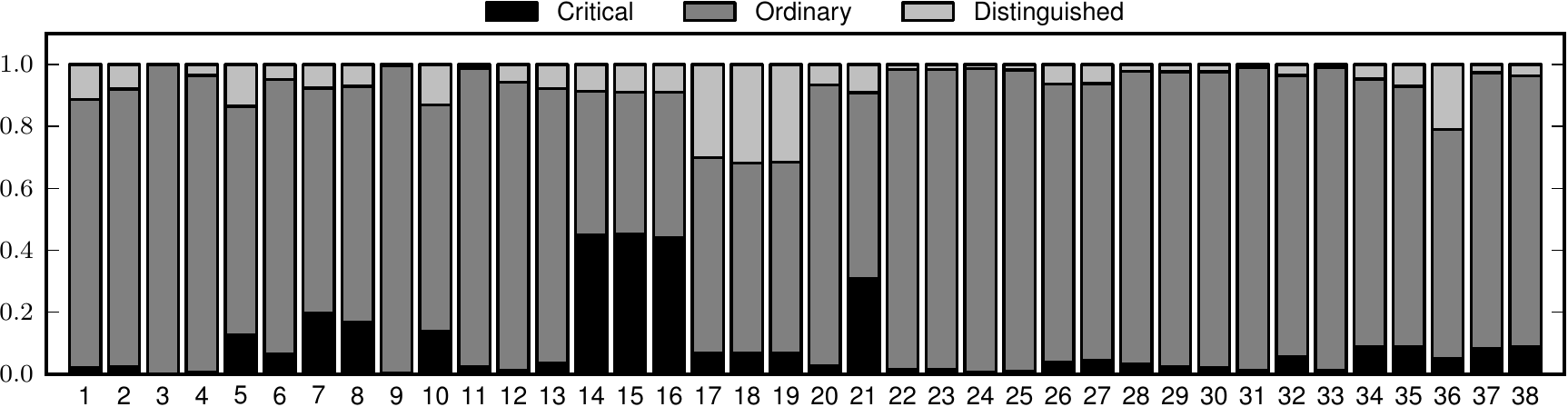}
\caption{Fraction of distinguished (light gray), ordinary (dark gray) and
critical (black) edges in the real networks studied in this paper. Numbers
refer to the network indices in Table~\ref{tab:real}.}
\label{fig:edge_classes}
\end{figure}

\subsection{Implementation}

An open-source implementation of the driver node calculations and the edge
classification for arbitrary networks is provided at
\texttt{http://github.com/ntamas/netctrl}.


\begin{thebibliography}{10}

\bibitem{achacoso92}
T.~Achacoso and W.~Yamamoto.
\newblock {\em {AY}'s Neuroanatomy of C. elegans for Computation}.
\newblock CRC Press, Boca Raton, FL, 1st edition, 1992.

\bibitem{adamic05}
L.~Adamic and N.~Glance.
\newblock The political blogosphere and the 2004 {US} election.
\newblock In {\em Proceedings of the WWW-2005 Workshop on the Weblogging
  Ecosystem}. 2005.

\bibitem{albert02}
R.~Albert and A.-L. Barab\'asi.
\newblock Statistical mechanics of complex networks.
\newblock {\em Rev. Mod. Phys.}, 74:47, 2002.

\bibitem{albert99}
R.~Albert, H.~Jeong, and A.-L. Barab\'asi.
\newblock Diameter of the {W}orld {W}ide {W}eb.
\newblock {\em Nature}, 401:130--131, 1999.

\bibitem{albert00}
R.~Albert, H.~Jeong, and A.-L. Barab\'asi.
\newblock Error and attack tolerance of complet networks.
\newblock {\em Nature}, 406:378--382, 2000.

\bibitem{amaral00}
L.~Amaral, A.~Scala, M.~Barthel\'emy, and H.~Stanley.
\newblock Classes of small-world networks.
\newblock {\em Proc. Natl. Acad. Sci. USA}, 97:11149--11152, 2000.

\bibitem{balaji06}
S.~Balaji, M.~Babu, L.~Iyer, N.~Luscombe, and L.~Aravind.
\newblock Comprehensive analysis of combinatorial regulation using the
  transcriptional regulatory network of yeast.
\newblock {\em J. Mol. Biol}, 360(1):213--27, 2006.

\bibitem{barabasi99}
A.-L. Barab\'asi and R.~Albert.
\newblock Emergence of scaling in random networks.
\newblock {\em Science}, 286:509--512, 1999.

\bibitem{bianconi08}
G.~Bianconi, N.~Gulbahce, and A.~Motter.
\newblock Local structure of directed networks.
\newblock {\em Phys. Rev. Lett.}, 100:118701, 2008.

\bibitem{bianconi05}
G.~Bianconi and M.~Marsili.
\newblock Loops of any size and hamilton cycles in random scale-free networks.
\newblock {\em J. Stat. Mech.}, page P06005, 2005.

\bibitem{boccaletti06}
S.~Boccaletti, V.~Latora, Y.~Moreno, M.~Chavez, and D.-U. Huang.
\newblock Complex networks: structure and dynamics.
\newblock {\em Phys. Rep.}, 424(4--5):175--308, 2006.

\bibitem{bollobas01}
B.~Bollob\'as.
\newblock {\em Random Graphs}.
\newblock Cambridge Studies in Advanced Mathematics. Cambridge University
  Press, Cambridge, second edition, 2001.

\bibitem{caldarelli07}
G.~Caldarelli.
\newblock {\em Scale-Free Networks: Complex Web in Nature and Technology}.
\newblock Oxford University Press, 2007.

\bibitem{christian99}
R.~Christian and J.~Luczkovich.
\newblock Organizing and understanding a winter's seagrass foodweb network
  through effective trophic levels.
\newblock {\em Ecological Modelling}, 117:99--124, 1999.

\bibitem{cohen00}
R.~Cohen, K.~Erez, D.~Ben-Avraham, and S.~Havlin.
\newblock Resilience of the {I}nternet to random breakdowns.
\newblock {\em Phys. Rev. Lett.}, 85:4626--4628, 2000.

\bibitem{cowan11}
N.~Cowan, E.~Chastain, D.~Vilhena, J.~Freudenberg, and C.~Bergstrom.
\newblock Nodal dynamics determine the controllability of complex networks,
  2011.

\bibitem{cross04}
R.~Cross and A.~Parker.
\newblock {\em The Hidden Power of Social Networks}.
\newblock Harvard Business School Press, Boston, MA, USA, 2004.

\bibitem{dunne02}
J.~Dunne, R.~Williams, and N.~Martinez.
\newblock Food-web structure and network theory: {T}he role of connectance and
  size.
\newblock {\em Proc. Natl. Acad. Sci. USA}, 99(20):12917--22, 2002.

\bibitem{ebel02}
H.~Ebel, J.~Davidsen, and S.~Bornholdt.
\newblock Dynamics of social networks.
\newblock {\em Complexity}, 8:24--27, 2002.

\bibitem{erdos60}
P.~Erd\H{o}s and A.~R\'enyi.
\newblock On the evolution of random graphs.
\newblock {\em Publ. Math. Inst. Hung. Acad. Sci.}, 5:17--60, 1960.

\bibitem{fleischner91}
H.~Fleischner.
\newblock {\em Algorithms for Eulerian trails}, volume~50 of {\em Annals of
  Discrete Mathematics}.
\newblock Elsevier, 1991.

\bibitem{fortunato10}
S.~Fortunato.
\newblock Community detection in graphs.
\newblock {\em Phys. Rep.}, 486:75--174, 2010.

\bibitem{freeman79}
S.~Freeman and L.~Freeman.
\newblock Social science research reports 46.
\newblock Technical report, University of California, Irvine, CA, 1979.

\bibitem{jeong01}
H.~Jeong, S.~Mason, A.-L. Barab\'asi, and Z.~Oltvai.
\newblock Lethality and centrality in protein networks.
\newblock {\em Nature}, 411:41--42, 2001.

\bibitem{jeong00}
H.~Jeong, B.~Tombor, R.~Albert, Z.~Oltvai, and A.~Barabási.
\newblock The large-scale organization of metabolic networks.
\newblock {\em Nature}, 407(6804):651--4, 2000.

\bibitem{kalman63}
R.~Kalman.
\newblock Mathematical description of linear dynamical systems.
\newblock {\em J. Soc. Indus. Appl. Math. Ser. A}, 1:152--192, 1963.

\bibitem{kwak10}
H.~Kwak, C.~Lee, H.~Park, and S.~Moon.
\newblock What is {T}witter, a social network or a news media?
\newblock In {\em WWW'10: Proceedings of the 19th International Conference on
  World Wide Web}, pages 591--600, Raleigh, North Carolina, USA, 2010. ACM.

\bibitem{leskovec05}
J.~Leskovec and C.~Faloutsos.
\newblock Graphs over time: densification laws, shrinking diameters and
  possible explanations.
\newblock In {\em Proceedings of the ACM SIGKDD International Conference on
  Knowledge Discovery and Data Mining (KDD)}, 2005.

\bibitem{leskovec10}
J.~Leskovec, D.~Huttenlocher, and J.~Kleinberg.
\newblock Signed networks in social media.
\newblock In {\em Proceedings of the ACM SIGCHI Conference on Human Factors in
  Computing Systems (CHI)}, 2010.

\bibitem{leskovec09}
J.~Leskovec, K.~Lang, A.~Dasgupta, and M.~Mahoney.
\newblock Community structure in large networks: natural cluster sizes and the
  absence of large well-defined clusters.
\newblock {\em Internet Mathematics}, 6(1):29--123, 2009.

\bibitem{lin74}
C.~Lin.
\newblock Structural controllability.
\newblock {\em IEEE Trans. Automat. Contr.}, 19:201--8, 1974.

\bibitem{liu11}
Y.~Liu, J.~Slotine, and A.~Barabási.
\newblock Controllability of complex networks.
\newblock {\em Nature}, 473(7346):167--73, 2011.

\bibitem{lombardi07}
A.~Lombardi and M.~H\"{o}rnquist.
\newblock Controllability analysis of networks.
\newblock {\em Phys Rev E}, 75:056110, 2007.

\bibitem{luscombe04}
N.~Luscombe, M.~Madan~Babu, H.~Yu, M.~Snyder, A.~Teichmann, and M.~Gerstein.
\newblock Genomic analysis of regulatory network dynamics reveals large
  topological changes.
\newblock {\em Nature}, 431:308--312, 2004.

\bibitem{martinez91}
N.~Martinez.
\newblock Artifacts or attributes? {E}ffects of resolution on the {L}ittle
  {R}ock {L}ake food web.
\newblock {\em Ecological Monographs}, 61:367--392, 1991.

\bibitem{milo04}
R.~Milo, S.~Itzkovitz, N.~Kashtan, R.~Levitt, S.~Shen-Orr, I.~Ayzenshtat,
  M.~Sheffer, and U.~Alon.
\newblock Superfamilies of evolved and designed networks.
\newblock {\em Science}, 303(5663):1538--42, 2004.

\bibitem{milo02}
R.~Milo, S.~Shen-Orr, S.~Itzkovitz, N.~Kashtan, D.~Chklovskii, and U.~Alon.
\newblock Network motifs: simple building blocks of complex networks.
\newblock {\em Science}, 298(5594):824--7, 2002.

\bibitem{murota87}
K.~Murota.
\newblock {\em Systems Analysis by Graphs and Matroids}.
\newblock Springer-Verlag, Berlin, 1987.

\bibitem{nagy10}
M.~Nagy, Z.~\'Akos, D.~Biro, and T.~Vicsek.
\newblock Hierarchical group dynamics in pigeon flocks.
\newblock {\em Nature}, 464:890--893, 2010.

\bibitem{negyessy06}
L.~N\'egyessy, T.~Nepusz, L.~Kocsis, and F.~Bazs\'o.
\newblock Prediction of the main cortical areas and connections involved in the
  tactile function of the visual cortex by network analysis.
\newblock {\em Eur. J. Neurosci.}, 23(7):1919--1930, 2006.

\bibitem{nepusz08}
T.~Nepusz, A.~Petr\'oczi, L.~N\'egyessy, and F.~Bazs\'o.
\newblock Fuzzy communities and the concept of bridgeness in complex networks.
\newblock {\em Phys. Rev. E}, 77:016107, 2008.

\bibitem{newman03}
M.~Newman.
\newblock The structure and function of complex networks.
\newblock {\em SIAM Rev.}, 45:167--256, 2003.

\bibitem{newman04}
M.~Newman and M.~Girvan.
\newblock Finding and evaluating community structure in networks.
\newblock {\em Phys. Rev. E}, 69:026113, 2004.

\bibitem{newman02b}
M.~Newman, D.~Watts, and S.~Strogatz.
\newblock Random graph models of social networks.
\newblock {\em Proc. Natl. Acad. Sci. USA}, 99(Suppl 1):2566--2572, 2002.

\bibitem{norlen02}
K.~Norlen, G.~Lucas, M.~Gebbie, and J.~Chuang.
\newblock {EVA}: {E}xtraction, visualization and analysis of the
  telecommunications and media ownership network.
\newblock In {\em Proceedings of the International Telecommunications Society
  14th Biennial Conference (ITS2002)}, Seoul, South Korea, August 2002.

\bibitem{palla07b}
G.~Palla, A.-L. Barab\'asi, and T.~Vicsek.
\newblock Quantifying social group evolution.
\newblock {\em Nature}, 446:664--667, 2007.

\bibitem{palla05}
G.~Palla, I.~Der\'enyi, I.~Farkas, and T.~Vicsek.
\newblock Uncovering the overlapping community structure of complex networks in
  nature and society.
\newblock {\em Nature}, 435:814--818, 2005.

\bibitem{palla07}
G.~Palla, I.~Farkas, P.~Pollner, I.~Der\'enyi, and T.~Vicsek.
\newblock Directed network modules.
\newblock {\em New J. Phys.}, 9:186, 2007.

\bibitem{panzarasa09}
P.~Panzarasa, T.~Opsahl, and K.~Carley.
\newblock Patterns and dynamics of users' behaviour and interaction: Network
  analysis of an online community.
\newblock {\em Journal of the American Society for Information Science and
  Technology}, 60(5):911--932, 2009.

\bibitem{pastorsatorras01}
R.~Pastor-Satorras and A.~Vespignani.
\newblock Epidemic spreading in scale-free networks.
\newblock {\em Phys. Rev. Lett.}, 86:3200--3203, 2001.

\bibitem{pastorsatorras02}
R.~Pastor-Satorras and A.~Vespignani.
\newblock Immunization of complex networks.
\newblock {\em Phys. Rev. E}, 65:036104, 2002.

\bibitem{prill05}
R.~Prill, P.~Iglesias, and A.~Levchenko.
\newblock Dynamic properties of network motifs contribute to biological network
  organization.
\newblock {\em PLoS Biol.}, 3(11):e343, 2005.

\bibitem{rahmani09}
A.~Rahmani, M.~Ji, M.~Mesbahi, and M.~Egerstedt.
\newblock Controllability of multi-agent systems from a graph-theoretic
  perspective.
\newblock {\em SIAM J Contr. Optim.}, 48:162--186, 2009.

\bibitem{reinschke97}
K.~Reinschke and G.~Wiedemann.
\newblock Digraph characterization of structural controllability for linear
  descriptor systems.
\newblock {\em Linear Algebra Appl.}, 266:199--217, 1997.

\bibitem{richardson03}
M.~Richardson, R.~Agrawal, and P.~Domingos.
\newblock Trust management for the semantic web.
\newblock In {\em Proceedings of the Second International Semantic Web
  Conference}, 2003.

\bibitem{ripeanu02}
M.~Ripeanu, I.~Foster, and A.~Iamnitchi.
\newblock Mapping the {G}nutella network: {P}roperties of large-scale
  peer-to-peer systems and implications for system design.
\newblock {\em IEEE Internet Computing Journal}, 6(1):50--57, 2002.

\bibitem{shields76}
R.~Shields and J.~Pearson.
\newblock Structural controllability of multi-input linear systems.
\newblock {\em IEEE Trans. Automat. Contr.}, 21:203--212, 1976.

\bibitem{skellam46}
J.~Skellam.
\newblock The frequency distribution of the difference between two poisson
  variables belonging to different populations.
\newblock {\em Journal of the Royal Statistical Society: Series A}, 109(3):296,
  1946.

\bibitem{slotine91}
J.-J. Slotine and W.~Li.
\newblock {\em Applied Nonlinear Control}.
\newblock Prentice-Hall, New Jersey, 1991.

\bibitem{sontag98}
E.~Sontag.
\newblock {\em Mathematical Control Theory}.
\newblock Springer, New York, 1998.

\bibitem{vanduijn03}
M.~A.~J. Van~Duijn, M.~Huisman, F.~N. Stokman, F.~W. Wasseur, and E.~P.~H.
  Zeggelink.
\newblock Evolution of sociology freshmen inito a friendship network.
\newblock {\em J. Math. Soc.}, 27:153--191, 2003.

\bibitem{watts98}
D.~Watts and S.~Strogatz.
\newblock Collective dynamics of 'small-world' networks.
\newblock {\em Nature}, 393(6684):440--2, 1998.

\bibitem{yu09}
W.~Yu, G.~Chen, and J.~L\"{u}.
\newblock On pinning synchronization of complex dynamical networks.
\newblock {\em Automatica}, 45:429--435, 2009.

\end{thebibliography}
\end{document}